\newcommand*{\p}{\mathbb{P}}
\newcommand{\NB}{\textcolor{red}}
\def\lta{\mathrel{\hbox{\rlap{\hbox{\lower4pt\hbox{$\sim$}}}\hbox{$<$}}}}
\newcommand{\norm}[1]{\left\lVert#1\right\rVert}
\newtheorem{Lemma}{Lemma}[section]
\newtheorem{theorem}[Lemma]{Theorem}
\newtheorem{corollary}[Lemma]{Corollary}
\newtheorem{assumption}{Assumption}[section]
\theoremstyle{definition}
\date{}
\definecolor{darkblue}{rgb}{.1, 0.1,.8}
\definecolor{darkgreen}{rgb}{0,0.8,0.2}
\definecolor{darkred}{rgb}{.8, .1,.1}
\newcommand*{\E}{\mathbb{E}}
\newcommand*{\R}{\mathbb{R}}
\renewcommand{\P }{{\mathbb P}}
\newcommand{\1}{\mathbbm{1}}
\begin{document}

\title{\Large \bf Detecting relevant dependencies under measurement error with applications to the analysis of planetary system evolution}

\author{
{\rm Patrick Bastian}\\
Ruhr-University Bochum
\and
{\rm Nicolai Bissantz}\\
Ruhr-University Bochum
}

\maketitle

\begin{abstract}

Exoplanets play an important role in understanding the mechanics of planetary system formation and orbital evolution. In this context the correlations of different parameters of the planets and their host star are useful guides in the search for explanatory mechanisms. Based on a reanalysis of the data set from \cite{figueria14} we study the as of now still poorly understood correlation between planetary surface gravity and stellar activity of Hot Jupiters. Unfortunately, data collection often suffers from measurement errors due to complicated and indirect measurement setups, rendering standard inference techniques unreliable. \\ 

We present new methods to estimate and test for correlations in a deconvolution framework and thereby improve the state of the art analysis of the data in two directions. First, we are now able to account for additive measurement errors which facilitates reliable inference. Second we test for relevant changes, i.e. we are testing for correlations exceeding a certain threshold $\Delta$. This reflects the fact that small nonzero correlations are to be expected for real life data almost always and that standard statistical tests will therefore always reject the null of no correlation given sufficient data. Our theory focuses on quantities that can be estimated by U-Statistics which contain a variety of correlation measures. We propose a bootstrap test and establish its theoretical validity. As a by product we also obtain confidence intervals. Applying our methods to the Hot Jupiter data set from \cite{figueria14}, we observe that taking into account the measurement errors yields smaller point estimates and the null of no relevant correlation is rejected only for very small $\Delta$. This demonstrates the importance of considering the impact of measurement errors to avoid misleading conclusions from the resulting statistical analysis.
\end{abstract}

\section{Introduction}
\label{introduction}

Deconvolution-like problems are commonplace in a diverse range of areas and methods such as accounting for measurement errors in econometrics \cite{Kato2021}  or  signal de-blurring in image analysis \cite{Qiu2005}. They are a particular class of inverse problems and a common statistic model is given by the additive noise model
\begin{align}
\label{IntroModel}
    Z_i=X_i+\epsilon_i \quad i=1,...,n
\end{align}
where only the $Z_i \in \R^p$ are observed and one is interested in the density $f$ of $X_i$. Equation \eqref{IntroModel} then implies the relationship
\begin{align}
    g=f \star \psi
\end{align}
where $g$ is the density of $Z_i$ and $\psi$ the density of $\epsilon_i$ which is usually assumed to be known. There is by now a vast literature concerned with estimation of $f$ in this setup (see e.g.  \cite{fan91a}, \cite{fan91b}, \cite{dh93}, \cite{bissantz}). We contribute by extending estimation and inference of $U$-statistics to the model \eqref{IntroModel}.\\

\textbf{This research is motivated by a problem in understanding the formation and evolution of planetary systems.} Hot Jupiters are an only fairly recently discovered class of stellar objects that have been observed for the first time only a scant few decades ago. They are gas giants with mass comparable to or larger than that of Jupiter and extremely short orbital periods lasting only a few days as they typically orbit their parent star at rather short distances. While the possibility of the existence of such planets had already been considered by \cite{struve52} they have not been predicted by planet system formation models, thereby pointing to some open problems in this area. We refer to \cite{anrev18, JGR21} for more details. Understanding the physical characteristics of Hot Jupiters and the role they play in the evolution of planetary systems is therefore an attractive avenue towards closing gaps in planet system formation models. A first step in the search for explanatory mechanisms is to analyze the correlations between different physical quantities pertinent to this situation. Regarding hot Jupiters, a range of potentially important physical characteristics of the planet and its star have been analysed for correlations, see \cite{figueria14, anrev18, JGR21}. Among these, an interesting example is the potential correlation between stellar activity and planetary surface gravity \cite{figueria14}, which links characteristics of the host star and characteristics of the planet (cf. e.g. \cite{figueria14}). Measurement errors are a common occurrence when collecting astronomical data and unfortunately correlation coefficients such as spearman's $\rho$ are quite sensitive even to small perturbations of the observed data. We display an example of the distortion Spearman's correlation suffers under very small additive measurement errors in Figure \ref{Fig:Dist} below.
\begin{figure}[H]
    \centering
    \includegraphics[width=0.66\textwidth]{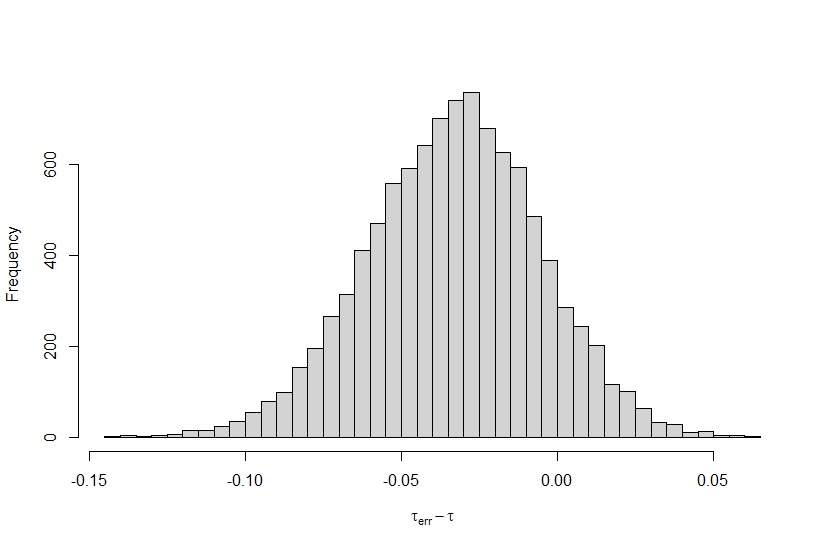}
    \caption{Histogram of differences between Spearman correlations for bivariate data without and with additive error, where we observe $X_i$, resp.  $Z_i =X_i +\varepsilon_i$ \eqref{IntroModel},  with $p=2$, $X$ bivariate normal with correlation$\epsilon_i$ either 0 or a bivariate Laplace distribution with variances equal to $0.05$ and uncorrelated marginals.  We sampled 10000 times at sample size 100, calculated the Spearman correlation without and with error, i.e. for $X_i$ resp. $Z_i$, and recorded their difference.}
    \label{Fig:Dist}
\end{figure}
In the case of the standard (Pearson) correlation the need to account for measurement error is well known \cite{Spearman1904} and a recent review of some available methods can be found in \cite{Saccenti2020}. We do not know of any reference treating general (rank based) correlation coefficients, but in the case of the Pearson and Kendall rank correlation \cite{Kitagawa2018} analyze the first order bias under additive measurement error. They provide a bias correction that requires an estimate of the covariance structure of the latent variables which is often not feasible in practice. They also do not provide an analysis or correction for the impact of measurement errors on the width of standard confidence intervals. Filling these gaps and providing the necessary tools for reliable and flexible inference regarding these correlations is the foremost concern of this work. \\

\textbf{Our Contributions:}\\
In all three references \cite{figueria14, anrev18, JGR21} correlations are estimated and then combined with a procedure to test whether the characteristics are uncorrelated. No uncertainty quantification (i.e. confidence intervals) is provided.  We will contribute to this task in two ways, each accounting for a deficiency of the previous setup. First we note that one often only observes a noisy version $Z = X+\epsilon$ of the desired quantities, here $\epsilon$ is a noise term with a known distribution, the details of which we discuss further below. Ignoring the measurement error results in unreliable inference and so far this has not been accounted for in the available literature. As many correlation coefficients can either be written as or approximated by a $U$-statistic our new methods are able to provide reliable inferential methods even in this setup. In addition we also provide confidence intervals for the estimated parameter that also account for the additive measurement error.
Second, on account of the population correlations rarely being exactly zero, it is well known that in almost any combination of interest of quantities a significant non-zero correlation is detected provided that the sample is sufficiently large and the applied test is consistent. While sometimes even the mere existence of a nonzero correlation may be of scientific interest, it is often the case that only sufficiently large correlations are of practical relevance. This motivates testing relevant hypotheses of the form 
\begin{align}
    \label{hyp}
    H_0(\Delta):\, |\rho|\leq \Delta  \quad vs  \quad H_1(\Delta):\, |\rho|> \Delta,
\end{align} 
where  correlations that are smaller than a given threshold $\Delta$ are discarded.  $\Delta$ can be either specified by the user based on practical considerations or be chosen in a data dependent way, thereby yielding a measure of evidence for/against the existence of a (non-neglible) correlation (see the discussion following Theorem \ref{t3}). The hypotheses \eqref{hyp} therefore offer a more flexible framework that is focused on finding practically significant correlations with a given statistical significance instead of merely detecting any and all nonzero correlations, no matter how small. Similar perspectives have, for instance, been taken in \cite{dette2024} and \cite{bastian2024}  \\

\underline{From a technical perspective} we establish that, given a $U$-statistic of degree 2 with associated kernel $k$ and expected value $\theta=\E[k(Z_i,Z_j)]$, we can construct a deconvolution based estimator $\hat \theta$ that enjoys a central limit theorem of the form
\begin{align}
\label{conv}
    \sqrt{n}h^{1+\beta}(\hat \theta-\E[\hat \theta])\rightarrow G
\end{align}
where $G$ is some Gaussian whose variance depends on $f$ and $k$ and $h$ is a bandwidth parameter. If we allow for undersmoothing bandwidths one may replace  $\E[\hat \theta]$ by $\theta$. Based on this result we construct a test for the hypotheses \eqref{hyp} that relies on a bootstrap procedure to procure the (data dependent) quantiles of $G$. We provide theoretical justifications for \eqref{conv} as well as for the bootstrap procedure.
We note that the derivation of \eqref{conv} is complicated by several issues. Contrary to \cite{bissantz} we can not use (weighted) strong approximations as the available results for the two dimensional case are too restrictive for our purposes. We therefore rely on a poissonization approach similar to \cite{Rosenblatt75}, which in turn is complicated by the fact that the kernels used in kernel deconvolution estimators are usually unbounded in the spatial domain, requiring a more delicate approach to certain bounds.

We also construct confidence bands for the parameter $\theta$ which are often useful in applications. Extensions of our results to higher order $U$-statistics are a straightforward but very cumbersome matter and are therefore omitted as they shed little additional insight into the nature of the problem we consider.  \\

\textbf{Further Related Literature:}\\
 
We first give some general references on deconvolution and its theoretical properties. Several nonparametric estimators for $f$ are available in the deconvolution setting, in particular there are kernel-based estimators (e.g. \cite{sc90}), estimators based on wavelets (\cite{pv99}) and iterative methods (\cite{hm04}). Here we will restrict ourselves to kernel estimators, see Section \ref{results}. For dealing with the deconvolution problem in the context of general statistical inverse problems see \cite{ro99}. It is well known that the minimax rate of convergence of estimators of the true density in such models depends sensitively on the tail behaviour of the characteristic function(s) $\Phi_{\psi}$ of the errors $\epsilon_i$ (cf. \cite{fan91a}). In many cases, results are obtained using the assumption that $\Phi_{\psi}(t)$ never vanishes. If $|\Phi_{\psi}(t)|$ is of polynomial order $|t|^{- \beta}$ for some $\beta >0$ as $|t| \to \infty$ the problem is called ordinary smooth and if  $|\Phi_{\psi^\kappa}(t)|$ is of exponential order $|t|^{\beta_0} e^{-|t|^{\beta}/\gamma}$, $\beta, \gamma>0$, super smooth. Here $| \cdot |$ denotes both the Euclidean norm on $\R^p$ and the absolute value on $\R$. Examples for ordinary smooth deconvolutioon are Laplace and Gamma deconvolution, and for the super smooth case normal and Cauchy deconvolution. A class of examples for ordinary smooth multivariate distributions is given in \cite{gs04}.\\

\underline{Regarding $U$-statistics in the deconvolution setting} we found that there exists barely any literature. To the best of our knowledge the general problem of estimation and inference regarding parameters expressible as $U$-statistics when the data suffers from additive measurement error is as of yet untreated. For the special case of certain rank statistics \cite{Kitagawa2018} find a formula for the bias incurred by additive measurement error that depends on parameters of the distribution of the latent variables. They propose a bias correction that depends on typically inaccessible parameters but do not provide inferential guarantees.\\

For a general development of $U$-statistics without measurement error we refer the interested reader to the seminal paper of  \cite{Hoeffding1948} and to \cite{lee2019} for a comprehensive summary of standard $U$-statistics theory. For robustification of $U$-statistics against heavy tails we refer the interested reader to \cite{Minsker2020}. \\

\textbf{The structure of the paper is as follows.}\\
In Section \ref{results} we present our data model and our main results. Section \ref{simulations} presents a simulation study of the results and Section \ref{hotjupiter} the results for the hot Jupiter data. Proofs are deferred to Appendix \ref{proofsappendix}.

\section{Results}
\label{results}

In this Section we present our estimator and our main results. Let $((X_1,Y_1),...,(X_n,Y_n))$ be a sample of independent identically distributed and paired observations with bivariate density $f$. We observe
\begin{align}
\label{additivemodel}
    (Z_{i1},Z_{i2})=(X_i,Y_i)+(\epsilon_{i,1},\epsilon_{i,2})
\end{align}
where $\epsilon_1,...,\epsilon_n$ is a sequence of iid bivariate noise variables with known density $\psi$. We denote the density of the perturbed sample $(Z_{11},Z_{12})$ by $g$.\\

We are interested in inference regarding a parameter $\theta$ that can be expressed as the expected value of a $U$-statistic of the latent sample $((X_1,Y_1),...,(X_n,Y_n))$. For the sake of notational brevity we will restrain ourselves to $U$-statistics of order 2, extension of the results to higher orders is a straightforward but cumbersome matter. Recall that to each $U$-statistic we associate a symmetric kernel $k:\R^{2\times 2}\rightarrow \R$ such that
\begin{align}
\label{theta}
    \theta=\E[k((X_1,Y_1),(X_2,Y_2))]=\int_{\R^2}\int_{\R^2}k(x,y)f(x)f(y)dxdy~.
\end{align}
As indicated in the introduction (see \eqref{hyp}) we want to test relevant hypotheses of the form
\begin{align}
    \label{UHyp}
    H_0(\Delta):|\theta|\leq \Delta \quad \text{vs} \quad H_1(\Delta):|\theta|>\Delta
\end{align}
to detect practically relevant deviations of $\theta$ from 0. Naturally the choice of $\Delta$ is of major importance, in some cases a natural choice can be identified by the practitioner based on subject knowledge. For cases where no such knowledge is available we provide a data dependent procedure to choose $\Delta$, further discussion of which can be found after the statement of Theorem \ref{t3} and its corollary.\\

To facilitate inference we need a suitable estimator of $\theta$. Equation \eqref{theta} suggests using
 \begin{align}
     \hat \theta=\int_{\R^2}\int_{\R^2}k(x,y)\hat f_n(x)\hat f_n(y)dxdy~.
 \end{align}
where $\hat f_n$ is a suitable estimator of $f$. We postpone discussing computability issues such as how to calculate this 4-dimensional integral to the next section. In the following we shall use a nonparametric kernel estimator for $f$ and to that end we need some additional notation. For any function $d:\R^{p}\rightarrow \R$ we denote its fourier transform by
\begin{align}
    \Phi_d(t)=\int_{\R^{p}} \exp(i\langle t,x \rangle )d(x)dx~,
\end{align}
and additionally let
\begin{align}
    \Phi_n(t)=n^{-1}\sum_{i=1}^ne^{i\langle t, Z_i\rangle}
\end{align}
be the empirical characteristic function of $Z_1,...,Z_n$. Further let $K$ be some kernel, we then denote for some bandwidth $h$ by $\hat f_n(x)$ the deconvolution estimator given by
\begin{align}
    \hat f_n(x)=\frac{1}{4\pi^2}\int_{\R^2} \exp(- i\langle t,x \rangle)\Phi_K(ht)\frac{\hat \Phi_n(t)}{\Phi_\psi(t)}dt\label{eq:est}
\end{align}
which can alternatively be rewritten as
\begin{align}
    \hat f_n(x)=\frac{1}{nh^2}\sum_{k=1}^nK_n((x-(X_k,Y_k))/h)
\end{align}
where the kernel $K_n$ is given by
\begin{align}
K_n(x)=\frac{1}{4\pi^2}\int_{\R^2} \exp(- i\langle t,x \rangle)\frac{ \Phi_K(t)}{\Phi_\psi(t/h)}dt ~.
\end{align}

We make the following assumptions to facilitate our theoretical analysis.
\begin{assumption}
\ \\
\begin{enumerate}
    \item [(A1)] The Fourier transform $\Phi_K$ of K is symmetric, $m\geq 3$ times differentiable and
supported on $[-1, 1]^2, \Phi_K(t) = 1$ for $t \in [-1, 1]^2, c > 0,$ and $|\Phi_K(t)|\leq 1$.   
    \item [(A2)] We have for some $\beta,C>0$ that
    \begin{align}
    \label{Charfraction}
        \frac{\Phi_\Psi(t)}{\norm{t}_2^{-\beta}}\rightarrow C
    \end{align}
    when $t \rightarrow \infty$.
    \item[(A3)] The second derivatives of $f$ are integrable. Additionally we assume that 
    \begin{align}
        \int_{\R^2}\sqrt{F(t)(1-F(t)}dt<\infty
    \end{align}
    \item [(A4)] We have
    \begin{align}
    \label{Assump4}
        \int_{\R^2}|K_{\infty}(z)-h^\beta K_n(z)|dz=o\Big(\frac{1}{\sqrt{n}h^{-3}}\Big)
    \end{align}
    where
    \begin{align}
        K_{\infty}(x):=\frac{1}{C4\pi^2}\int_{\R^2}\exp(-i\langle t,x \rangle)\norm{t}^\beta_2\Phi_K(t)dt
    \end{align}
    and it holds that
    \begin{align}
        \sqrt{n}h^{1+\beta}\rightarrow \infty\\
        \sqrt{n}h^{3+\beta}\rightarrow 0
    \end{align}
    \item [(A5)] The kernel $k$ is a bounded function.
\end{enumerate}
Note that these assumptions imply $h^{\beta}|K_n(z)|\lesssim (1+\norm{z})^{-m}$ for by the Riemann Lebesgue Lemma. The same also holds true for all derivatives of $K_n$. We also note that the kernel $K_n$ is symmetric.
\end{assumption}

Kernels satisfying (A1) are called flat top kernels and possess the favorable property of achieving optimal bias properties irrespective of the smoothness of the density to be estimated. One could also use, for instance, a Gaussian kernel at the expense of more laborious proofs. Instead of assumption (A2) one may instead require that $\Phi_\Psi$ is proportional to a positive semi-definite quadratic form of $(t_1,t_2) \in \R^2$ without changing the presented results except for notational accommodations. The first part of assumption (A3) is a mild regularity condition on the unknown density $f$ while the second part is satisfied whenever $\norm{(X,Y)}$ has finite $(2+\delta)$ moments for some $\delta>0$. Equation \eqref{Assump4} in Assumption (A4) is similar to Assumption 2 in \cite{bissantz} and can therefore be considered as a technical refinement of \eqref{Charfraction}. It holds, for instance, for the Laplace distribution.  Assumption (A5) is always fulfilled for the dependence measures we consider in this paper. We also remark that it can be weakened to a moment assumption at the cost of more laborious proofs and more involved statements of the results.\\

Our first result considers the asymptotic convergence of the estimator $\hat \theta$. Its proof (and the proof of every other Theorem and Corollary in the main text) can be found in the appendix.

\begin{theorem}
\label{t1}
     Under assumptions (A1) to (A5) we have that
    \begin{align}
        \sqrt{n}h^{1+\beta}(\hat \theta -\theta){\rightarrow} \mathcal{N}(0,4\sigma^2) 
    \end{align}
    where 
    \begin{align}
       \sigma^2:= \int_{\R^2}k_y(x)^2f(x)dx\int_{\R^2}K_\infty(x)^2dx~.
    \end{align}
    and $k_y(x)=\int_{\R^2}k(x,y)f(y)dy$.
\end{theorem}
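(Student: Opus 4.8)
The plan is to linearise $\hat\theta$ around the deterministic function $\bar f:=\mathbb E[\hat f_n]=K_h\star f$ (where $K_h$ has Fourier transform $t\mapsto\Phi_K(ht)$) and then to prove a central limit theorem for the leading stochastic term. Writing $\hat f_n=\bar f+\delta_n$ with $\delta_n:=\hat f_n-\bar f$, the bilinearity of $(\varphi,\phi)\mapsto\int\int k(x,y)\varphi(x)\phi(y)\,dx\,dy$ and the symmetry of $k$ give
\begin{align}
    \hat\theta-\theta
    &= \Big(\int\int k\,\bar f\,\bar f-\theta\Big)
      + 2\int\Big(\int k(x,y)\bar f(x)\,dx\Big)\delta_n(y)\,dy \\
    &\quad
      + \int\int k(x,y)\delta_n(x)\delta_n(y)\,dx\,dy
      =: b_n+L_n+Q_n .
\end{align}
The first, deterministic, term is a bias: the flat-top property in (A1) (namely $\Phi_K\equiv 1$ on a neighbourhood of the origin, so that all moments of $K$ of order $\ge 1$ vanish) together with the integrability of the second derivatives of $f$ in (A3) gives $\|\bar f-f\|_1=o(h^2)$, hence $|b_n|=o(h^2)$, and by the undersmoothing condition $\sqrt n h^{3+\beta}\to 0$ in (A4) this is $o\big((\sqrt n h^{1+\beta})^{-1}\big)$. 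Similarly $\mathbb E[Q_n]=\int\int k(x,y)\,\mathrm{Cov}(\hat f_n(x),\hat f_n(y))\,dx\,dy=O(1/n)$, which is negligible after multiplication by $\sqrt n h^{1+\beta}$ because $\sqrt n h^{1+\beta}\to\infty$. Thus it suffices to prove the limit law for the centred stochastic part.

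The next step is to insert $\delta_n=\tfrac1n\sum_{i=1}^n D_i$, with the i.i.d.\ centred summands $D_i(\,\cdot\,)=h^{-2}K_n\big((\,\cdot\,-Z_i)/h\big)-\bar f(\,\cdot\,)$, into $L_n$ and $Q_n$. This expresses $L_n$ and the off-diagonal part of $Q_n$ through (possibly degenerate) $U$-statistics of the \emph{observed} sample $Z_1,\dots,Z_n$ whose kernels are built from the deconvolution kernel $K_n$, and a Hoeffding-type decomposition then isolates the dominant term; as flagged in the introduction (and in line with the Rosenblatt-type proof strategy), the relevant object is a quadratic form in $\delta_n$ rather than a simple empirical average. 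The analytic input for identifying the limiting variance is the behaviour of $K_n$: assumptions (A2) and (A4) give $h^\beta K_n\to K_\infty$ in $L^1$ at the fast rate \eqref{Assump4}, and the remark after the assumptions supplies the uniform tail bound $h^\beta|K_n(z)|\lesssim(1+\|z\|)^{-m}$, $m\ge 3$, and likewise for derivatives. Rescaling $K_n$ by $h^\beta$ and passing to the limit produces the factor $\int K_\infty(x)^2\,dx$; the partial integration $\int k(\,\cdot\,,y)f(y)\,dy=k_y$ produces the factor $\int k_y(x)^2 f(x)\,dx$; and the constant $4$ is the square of the degree-$2$ projection coefficient $2$ visible in $L_n$. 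Carrying this computation through gives precisely the claimed $\sigma^2$, and $k$ being bounded (A5) is what keeps all the intermediate moment bounds finite.

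For the normality of the leading term I would not use a strong-approximation argument — the available two-dimensional coupling results are too restrictive here — but rather a Poissonisation argument in the spirit of \cite{Rosenblatt75}: replace the sample size $n$ by an independent $\mathrm{Poisson}(n)$ count, so that contributions of the Poissonised statistic from spatially separated regions become independent, establish asymptotic normality of the Poissonised statistic via the Lindeberg--Feller theorem for triangular arrays, and then de-Poissonise. The step I expect to be the main obstacle is exactly the one the paper singles out: verifying the Lindeberg (or a Lyapunov) condition when the deconvolution kernel $K_n$ is unbounded on its spatial domain, so that the crude moment estimates diverge. The remedy is to split every spatial integral into a bulk region and a tail region and to control the tails by the polynomial decay $h^\beta|K_n(z)|\lesssim(1+\|z\|)^{-m}$ (and the matching bound for $K_\infty$), which is where both the flat-top assumption (A1) and the refinement \eqref{Assump4} are used. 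Finally, $\sqrt n h^{1+\beta}\to\infty$ guarantees the Gaussian term dominates the $O(1/n)$ remainders and $\sqrt n h^{3+\beta}\to 0$ that it dominates the bias, so that after multiplication by $\sqrt n h^{1+\beta}$ only $\mathcal N(0,4\sigma^2)$ survives.
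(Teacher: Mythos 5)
Your proposal follows essentially the same route as the paper: your decomposition into bias, linear and quadratic parts corresponds to Lemmas \ref{pl1} and \ref{pl3}, the replacement of $h^{\beta}K_n$ by $K_\infty$ and its spatial truncation to Lemma \ref{pl4}, and the Poissonisation--blocking--Lyapunov argument with de-Poissonisation to Lemmas \ref{pl5}, \ref{pl6} and Theorem \ref{t2}. The one imprecision is your claim $\mathbb{E}[Q_n]=O(1/n)$: the diagonal term actually contributes $O(n^{-1}h^{-2-2\beta})$ since each factor of $K_n$ costs $h^{-\beta}$, but this is still $o\big((\sqrt{n}h^{1+\beta})^{-1}\big)$ under $\sqrt{n}h^{1+\beta}\to\infty$, exactly as you conclude.
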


As $k$ and $K$ are known quantities and a consistent estimator of $f$ is available we may construct an inferential procedure and confidence bands based on a suitable estimate $\hat \sigma^2$ of $\sigma^2$. To be precise we consider the test statistic
\begin{align}
    \hat T_{n,\Delta}=\sqrt{n}h^{1+\beta}(|\hat \theta|-\Delta)~,
\end{align}
and the decision rule
\begin{align}
    \text{"Reject $H_0(\Delta)$ if $\hat T_{n,\Delta}>q_{1-\alpha}$"}
\end{align}
where $q_{1-\alpha}$ is the $(1-\alpha)$-quantile of $\mathcal{N}(0,4\hat \sigma^2)$. It is easy to see that this decision rule  yields a consistent asymptotic level $\alpha$ test.
Unfortunately this approach may not perform well in many practical situations due to the limited number of observations available in combination with the rather slow convergence rate $\sqrt{n}h^{1+\beta}$. This motivates the construction of a bootstrap scheme which we pursue in the next subsection. One may similarly use the statistic 
\begin{align}
    \hat T_n=\sqrt{n}h^{1+\beta}\hat \theta
\end{align}
to test the classical hypotheses $\theta=0$ and we note that all results we present are, suitably modified, also true in this setting.

\subsection{A Bootstrap Procedure}
First we give some additional assumptions that we will require for the theoretical analysis of the bootstrap.

\textbf{Bootstrap Assumptions}
\begin{itemize}
    \item[(B1)] $m=\infty$ in (A1) 
    \item[(B2)] There exists a $\kappa>0$ such that $X_i$ and $Y_i$ have finite moments of order $\kappa$. Additionally we require that
    \begin{align}
        \frac{n^{2/\kappa+\delta}}{\sqrt{n}h^{1+\beta}}=o(1)~.
    \end{align}
    holds for some $\delta>0$. 
\end{itemize}
Assumption (B1) serves to facilitate concise proofs and can in principle be dropped, we chose not to do this as this already includes a sufficient selection of suitable kernels $K$ such as flat top kernels.  Assumption (B2) is used to obtain an empirical analogue of assumption (A3) and requires more finite moments the smoother the density $\psi$ is, i.e. the larger $\beta$ is.\\

To construct the bootstrap test statistic let $Z_1^*,...,Z_n^*$ be drawn with replacement from $Z_1,...,Z_n$ and let
\begin{align}
    \hat f_n^*&= \frac{1}{nh^2}\sum_{k=1}^nK_n((x-Z_i^*/h)\\
    \hat \theta^*&=\int_{\R^2}\int_{\R^2}k(x,y)\hat f_n^*(x)\hat f_n^*(y)dxdy
\end{align}
denote the estimates $\hat f_n$ and $\hat \theta$ calculated on the sample $Z_1^*,...,Z_n^*$. In the appendix we prove the following convergence result.
\begin{theorem}
\label{t3}
    Assume (A1) to (A5) as well as (B1) and (B2). Then we have that
    \begin{align}
        \sqrt{n}h^{1+\beta}(\hat \theta^* - \hat \theta) \rightarrow \mathcal{N}(0,4\sigma^2)
    \end{align}
    conditionally on $Z_1,...,Z_n$ in probability.
\end{theorem}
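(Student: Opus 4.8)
I would adapt the classical Hájek--projection proof of bootstrap consistency for non-degenerate $U$-statistics to the present triangular-array situation, in which the kernel depends on $n$ through $h$. Write $\mathbb{P}^{*},\E^{*},\operatorname{Var}^{*}$ for probability, expectation and variance under resampling conditional on $Z_1,\dots,Z_n$, and $\hat{\mathbb{P}}_n=\frac1n\sum_{i=1}^n\delta_{Z_i}$ for the empirical measure. By Fubini both statistics are $V$-statistics with the common kernel
\begin{align}
k_n(z,w):=\int_{\R^2}\int_{\R^2}k(z+hu,w+hv)K_n(u)K_n(v)\,du\,dv,
\end{align}
namely $\hat\theta=\frac1{n^2}\sum_{i,j}k_n(Z_i,Z_j)$ and $\hat\theta^{*}=\frac1{n^2}\sum_{i,j}k_n(Z_i^{*},Z_j^{*})$. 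Setting $g_n(z):=\frac1n\sum_j k_n(z,Z_j)$, the $\hat{\mathbb{P}}_n$-Hájek function $\hat k_{n,1}(z):=g_n(z)-\hat\theta$ and the $\hat{\mathbb{P}}_n$-degenerate remainder $\hat k_{n,2}(z,w):=k_n(z,w)-\hat k_{n,1}(z)-\hat k_{n,1}(w)-\hat\theta$ (so that $\frac1n\sum_j\hat k_{n,2}(z,Z_j)=0$ for every $z$), the Hoeffding decomposition of $\hat\theta^{*}$ gives
\begin{align}
\sqrt n\,h^{1+\beta}\big(\hat\theta^{*}-\hat\theta\big)=\underbrace{\frac{2h^{1+\beta}}{\sqrt n}\sum_{i=1}^n\hat k_{n,1}(Z_i^{*})}_{=:L_n^{*}}\;+\;\underbrace{\sqrt n\,h^{1+\beta}\,\frac1{n^2}\sum_{i,j}\hat k_{n,2}(Z_i^{*},Z_j^{*})}_{=:R_n^{*}}.
\end{align}

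\textbf{Step 1: the remainder is conditionally negligible.} Since $\hat k_{n,2}$ is degenerate for $\hat{\mathbb{P}}_n$, only coincident-index patterns survive in $\E^{*}[(R_n^{*})^2]$ and the standard variance bound for degenerate $V$-statistics yields $\E^{*}[(R_n^{*})^2]\lesssim \frac{h^{2+2\beta}}{n}\cdot\frac1{n^2}\sum_{i,j}\hat k_{n,2}(Z_i,Z_j)^2$ up to lower-order diagonal terms. From the tail bound $h^{\beta}|K_n(\cdot)|\lesssim(1+\norm{\cdot})^{-m}$ (with $m=\infty$ by (B1)) and (A2) one has $\int|K_n|\lesssim h^{-\beta}$, hence, using also (A5), $|k_n|\lesssim h^{-2\beta}$ and $\sup_z|\hat k_{n,1}(z)|\lesssim h^{-2\beta}$ deterministically; therefore $\E^{*}[(R_n^{*})^2]\lesssim h^{2-2\beta}/n\to0$, since the rate conditions in (A4) force $h\gg n^{-1/(2(1+\beta))}$. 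By Markov's inequality $R_n^{*}\to0$ conditionally in probability. The diagonal of the $V$-statistic and the centering discrepancy $\E^{*}[\hat\theta^{*}]-\hat\theta$ are of order $h^{-2\beta}/n$, i.e. $O\big(h^{2}/(\sqrt n\,h^{1+\beta})\big)=o(1)$ after rescaling, and are absorbed here; this is the conditional counterpart of the negligibility of the degenerate part established inside the proof of Theorem~\ref{t1}.

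\textbf{Step 2: conditional CLT for the linear term.} Conditionally on the data, $\xi_i^{*}:=\frac{2h^{1+\beta}}{\sqrt n}\hat k_{n,1}(Z_i^{*})$ are i.i.d.\ under $\mathbb{P}^{*}$ with $\E^{*}\xi_i^{*}=\frac{2h^{1+\beta}}{\sqrt n}\cdot\frac1n\sum_j\hat k_{n,1}(Z_j)=0$, so $L_n^{*}=\sum_i\xi_i^{*}$ is a normalized sum of i.i.d.\ variables and the Lindeberg--Feller theorem applies. The Lindeberg condition is trivial: since $\sup_z|\hat k_{n,1}(z)|\lesssim h^{-2\beta}$ deterministically, $\max_i|\xi_i^{*}|\lesssim h^{1-\beta}/\sqrt n=h^{2}/(\sqrt n\,h^{1+\beta})\to0$, so the truncation is eventually vacuous. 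The conditional variance is $\operatorname{Var}^{*}(L_n^{*})=4h^{2+2\beta}\cdot\frac1n\sum_{i=1}^n\hat k_{n,1}(Z_i)^2$, and it remains to show this converges in probability to $4\sigma^2$. Decompose $\hat k_{n,1}=(g_n-\bar g_n)+(\bar g_n-\hat\theta)$ with $\bar g_n(z):=\E[k_n(z,Z_2)]=k_{n,1}(z)+\theta_n$. The $(g_n-\bar g_n)$ part contributes $h^{2+2\beta}\cdot\frac1n\sum_i(g_n(Z_i)-\bar g_n(Z_i))^2$, which has expectation of order $h^{2-2\beta}/n=o(1)$ (the summands being centered averages with second moment $O(h^{-4\beta}/n)$, after the standard decoupling of the self-term), hence is $o_{\mathbb{P}}(1)$; by Cauchy--Schwarz the cross term with $(\bar g_n-\hat\theta)$ then also vanishes. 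Since $\hat\theta\xrightarrow{\mathbb{P}}\theta$ and $\theta_n\to\theta$, the $(\bar g_n-\hat\theta)$ part reduces to $h^{2+2\beta}\cdot\frac1n\sum_i k_{n,1}(Z_i)^2$, whose fluctuation around its mean is controlled using $\E[k_{n,1}(Z_1)^4]\lesssim(\sup_z|k_{n,1}(z)|)^2\,\E[k_{n,1}(Z_1)^2]$ together with (B2), and whose mean $h^{2+2\beta}\int_{\R^2}k_{n,1}(z)^2 g(z)\,dz\to\sigma^2$ is exactly the variance computation carried out in the proof of Theorem~\ref{t1} (where $\operatorname{Var}\big(\sqrt n\,h^{1+\beta}(\hat\theta-\theta)\big)=4h^{2+2\beta}\E[k_{n,1}(Z_1)^2]+o(1)\to4\sigma^2$). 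Hence $\operatorname{Var}^{*}(L_n^{*})\xrightarrow{\mathbb{P}}4\sigma^2$, so $L_n^{*}\Rightarrow\mathcal{N}(0,4\sigma^2)$ conditionally in probability, and combined with Step~1 this proves the claim.

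\textbf{Main obstacle.} The crux is the in-probability convergence of the conditional variance $4h^{2+2\beta}\cdot\frac1n\sum_i\hat k_{n,1}(Z_i)^2$ to $4\sigma^2$: $\hat k_{n,1}$ is a random, $n$-dependent function that grows like $h^{-1-\beta}$ (the rank-type kernels of interest have slowly decaying Fourier transforms), so both passing from $g_n$ to its conditional mean $\bar g_n$ and passing from the empirical second moment to its expectation require moment estimates that carefully track powers of $h$; these rest on the tail bound $h^{\beta}|K_n(z)|\lesssim(1+\norm{z})^{-m}$ and on the moment assumption (B2) --- which is precisely why (B2) demands more finite moments as $\beta$ grows. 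The remaining ingredients --- the Lindeberg verification and the negligibility of $R_n^{*}$ --- are comparatively routine conditional adaptations of the unconditional arguments in the proof of Theorem~\ref{t1}.
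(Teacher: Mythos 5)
Your argument is correct in outline but takes a genuinely different route from the paper. The paper proves Theorem \ref{t3} by rerunning the proof of Theorem \ref{t1} conditionally on the data: it linearizes around the projection $k_y$, replaces $K_n$ by the truncated asymptotic kernel $\tilde K_\infty$, poissonizes the bootstrap empirical measure, partitions the plane into strips $I_{jk}$ on which the resulting integrals are conditionally independent by the Poisson-process structure (Lemmas \ref{bl2}--\ref{bl6}), and applies the Lyapunov CLT blockwise followed by Billingsley's Theorem 3.2 (Theorem \ref{bt2}); assumption (B2) enters there to control $\norm{F_n^*-F_n}_1$ in the conditional analogue of Lemma \ref{pl2}. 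You instead exploit the exact $V$-statistic representation $\hat\theta^*=n^{-2}\sum_{i,j}k_n(Z_i^*,Z_j^*)$ and the fact that, conditionally on the data, the $Z_i^*$ are genuinely i.i.d.\ from the empirical measure: the Hoeffding decomposition with respect to $\hat{\mathbb{P}}_n$ is exact, the degenerate part dies under a second-moment bound (your rate $h^{2-2\beta}/n\to 0$ does follow from $\sqrt n h^{1+\beta}\to\infty$), and the linear part is a triangular array of conditionally i.i.d.\ \emph{bounded} summands to which Lindeberg--Feller applies directly. This avoids poissonization and blocking entirely and compresses the whole bootstrap problem into one nontrivial fact --- the in-probability convergence of $4h^{2+2\beta}n^{-1}\sum_i\hat k_{n,1}(Z_i)^2$ to $4\sigma^2$ --- which you correctly reduce, via the deterministic bound $\sup|k_n|\lesssim h^{-2\beta}$ and a Chebyshev step, to the deterministic limit $h^{2+2\beta}\E[k_{n,1}(Z_1)^2]\to\sigma^2$ already implicit in Theorem \ref{t1}. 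Two remarks: your route appears not to need (B2) at all, since the sup-norm kernel bounds replace the $L^1$ control of $F_n^*-F_n$, which would be a genuine simplification if the details are written out; and the single ingredient you import, the limit of $h^{2+2\beta}\E[k_{n,1}(Z_1)^2]$, is obtained in the paper only through the poissonized block-variance computation in the proof of Theorem \ref{t2}, so you must either cite that computation or redo it directly for $k_{n,1}$ --- a presentational dependence rather than a gap.
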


As a consequence of this result we may therefore base our inferential method on the bootstrap quantiles $q_{1-\alpha}^*$ of $\sqrt{n}h^{1+\beta}(\hat \theta^*-\hat \theta)$. We record the properties of the resulting test as a corollary.
\begin{corollary}
Assume (A1) to (A5) as well as (B1) and (B2). Then 
\begin{align}
\label{boottest}
    \lim_{n \rightarrow \infty}\p(\hat  T_{n,\Delta}\geq q^*_{1-\alpha})=\begin{cases}
        0 \quad &\theta<\Delta \\
        \alpha \quad &\theta=\Delta\\
        1 \quad &\theta>\Delta
    \end{cases}
\end{align}
\end{corollary}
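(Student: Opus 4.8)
The plan is to deduce the corollary from Theorems \ref{t1} and \ref{t3} together with two elementary observations, reading the three cases in \eqref{boottest} as $|\theta|<\Delta$, $|\theta|=\Delta$, $|\theta|>\Delta$ (equivalently, restricting to $\theta\geq 0$, which loses nothing since the set-up depends on $\theta$ only through $|\theta|$ and $\mathcal{N}(0,4\sigma^2)$ is symmetric). First, $\hat\theta$ is consistent: by Theorem \ref{t1} the rescaled error $\sqrt{n}h^{1+\beta}(\hat\theta-\theta)$ is stochastically bounded and $\sqrt{n}h^{1+\beta}\to\infty$ by (A4), so $\hat\theta\xrightarrow{p}\theta$. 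In particular, for $\theta\neq 0$ the quantity $\sqrt{n}h^{1+\beta}(|\hat\theta|-|\theta|)$ coincides with $\pm\sqrt{n}h^{1+\beta}(\hat\theta-\theta)$ on an event of probability tending to $1$ and hence converges in distribution to $\mathcal{N}(0,4\sigma^2)$, while for $\theta=0$ it equals $\sqrt{n}h^{1+\beta}|\hat\theta|\xrightarrow{d}|\mathcal{N}(0,4\sigma^2)|$; in all cases it is $O_{\p}(1)$. Second, the bootstrap quantile stabilises: writing $q_{1-\alpha}$ for the deterministic $(1-\alpha)$-quantile of $\mathcal{N}(0,4\sigma^2)$ and assuming $\sigma^2>0$, this law has a continuous, strictly increasing distribution function, so the quantile functional is continuous at it; Theorem \ref{t3} gives weak convergence in probability of the conditional law of $\sqrt{n}h^{1+\beta}(\hat\theta^*-\hat\theta)$ to $\mathcal{N}(0,4\sigma^2)$, and a sub-subsequence argument — along subsequences where the conditional convergence holds almost surely, combine uniform convergence of the conditional distribution functions with continuity of the quantile map — yields $q^*_{1-\alpha}\xrightarrow{p}q_{1-\alpha}$.

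With these in hand I would use the decomposition
\begin{align}
    \hat T_{n,\Delta}=\sqrt{n}h^{1+\beta}\big(|\hat\theta|-|\theta|\big)+\sqrt{n}h^{1+\beta}\big(|\theta|-\Delta\big).
\end{align}
If $|\theta|>\Delta$, the second summand tends to $+\infty$ while the first summand and $q^*_{1-\alpha}$ are $O_{\p}(1)$, hence $\hat T_{n,\Delta}-q^*_{1-\alpha}\to+\infty$ in probability and the rejection probability tends to $1$. If $|\theta|<\Delta$, the second summand tends to $-\infty$ and the remaining terms are bounded in probability, so $\hat T_{n,\Delta}-q^*_{1-\alpha}\to-\infty$ and the rejection probability tends to $0$. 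On the boundary $|\theta|=\Delta$ (here $\Delta>0$, so $\theta\neq 0$) the first observation gives $\hat T_{n,\Delta}=\sqrt{n}h^{1+\beta}\big(|\hat\theta|-|\theta|\big)\xrightarrow{d}\mathcal{N}(0,4\sigma^2)$; combining this with $q^*_{1-\alpha}\xrightarrow{p}q_{1-\alpha}$ via Slutsky's lemma and using continuity of the Gaussian distribution function at $q_{1-\alpha}$ gives $\p(\hat T_{n,\Delta}\geq q^*_{1-\alpha})\to\p\big(\mathcal{N}(0,4\sigma^2)\geq q_{1-\alpha}\big)=\alpha$.

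I expect the only genuinely delicate step to be the transfer from the conditional-in-probability weak convergence of Theorem \ref{t3} to the in-probability convergence of the random quantile $q^*_{1-\alpha}$ to a constant: weak convergence in probability of the bootstrap law only yields almost sure conditional convergence along subsequences, and continuity of the quantile functional is available precisely because the limiting law is a non-degenerate Gaussian, so the argument has to be routed through the sub-subsequence characterisation of convergence in probability together with uniform convergence of the conditional distribution functions. Everything else is bookkeeping with the displayed decomposition and Slutsky's lemma; the kink of $|\cdot|$ at the origin is inconsequential except when $\theta=0$, i.e.\ $\Delta=0$, which is the degenerate instance of the relevant hypothesis and is not treated here (for $\theta=0$ one works instead with the statistic $\hat T_n=\sqrt{n}h^{1+\beta}\hat\theta$).
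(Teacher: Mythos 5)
Your proposal is correct and is exactly the standard argument the paper implicitly relies on: the appendix in fact contains no written-out proof of this corollary, treating it as an immediate consequence of Theorems \ref{t1} and \ref{t3}, and your route (consistency of $\hat\theta$, transfer of the conditional-in-probability convergence to $q^*_{1-\alpha}\xrightarrow{p}q_{1-\alpha}$ via sub-subsequences, then the decomposition of $\hat T_{n,\Delta}$ and Slutsky) is the canonical way to fill that gap. Your observation that the boundary case requires $\Delta>0$ (since at $\theta=\Delta=0$ the limit would be $\P(|\mathcal{N}(0,4\sigma^2)|\geq q_{1-\alpha})=2\alpha$) is a legitimate caveat the paper does not spell out.
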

Note that the quantile $q_{1-\alpha}^*$ does not depend on the choice of $\Delta$, combined with the fact that the statistic $T_{n,\Delta}$ is a monotone function of $\Delta$ we therefore have that for $\Delta_1>\Delta_2$ the implication
\begin{align}
     \hat T_{n,\Delta_1}>q^*_{1-\alpha} \implies \hat T_{n,\Delta_2}>q^*_{1-\alpha}
\end{align}
holds. By the sequential rejection principle we may therefore test for multiple $\Delta$ without inflating the type 1 error, yielding a data dependent choice of $\Delta$ given by
\begin{align}
    \hat \Delta_{\min}=\min\{\Delta : \hat T_{n,\Delta}\leq q^*_{1-\alpha}\}\lor 0
\end{align}
i.e. the minimal $\Delta$ for which $H_0(\Delta)$ is not rejected. In this way we can sidestep the issue of threshold selection while simultaneously providing a measure of evidence for or against a relevant deviation of $\theta$ from 0.\\

\textbf{Remark: Confidence Intervals}\\
As an alternative and/or additionally to the statistical test, we also propose to determine confidence intervals. I.e. given the estimate $\hat \theta$ and its bootstrap realizations $\hat \theta^*_i$ for $i=1,...,B$ we define the variance estimator
\begin{align}
    \tilde{\sigma}^2 := \frac{1}{B} \sum_{i=1}^B \left(\hat\theta^*_i-\hat\theta\right)^2
\end{align}
As a consequence of Theorems \ref{t1} and \ref{t3} a two-sided asymptotic $(1-\alpha)$-confidence interval is then given by
\begin{align}
\label{ConfInt}
    \left[\hat \theta -z_{1-\frac{\alpha}{2}}\cdot \sqrt{ \tilde{ \sigma}^2},\,\hat \theta +z_{1-\frac{\alpha}{2}}\cdot \sqrt{ \tilde{ \sigma}^2}\right].
\end{align}

where $z_{1-\alpha}$ is the $(1-\alpha)$ quantile of the standard normal distribution. Of course a construction based on $q^*_{1-\alpha}$ is also feasible, but we have found that it performs worse in finite samples on synthetic data and therefore omit it.

\section{Simulations}
\label{simulations}
In this section we present the results of a simulation study of our proposed method. First we describe the general setup of the simulations. Then we describe in more detail the procedure of our method and the results of the study.

\subsection{General setup of the simulations}

In our simulations we consider Kendall's $\tau$ correlation with associated kernel $k$ given by
\begin{align}
    k(x,y)=\1(x_1-y_1)\1(x_2-y_2)~.
\end{align}
Motivated by its widespread practical use we also perform simulations for the Spearman correlation which is, up to an asymptotically negligible term, associated to the kernel
\begin{align}
    k(x,y,z)=\1(x_1-y_1)\1(x_2-z_2)
\end{align}
for which our results do not directly apply, but can be extended with some tedium. \\

We consider two different scenarios for the additive error. In more detail we consider
\begin{align}    
    Z_i^j & \sim \mathcal{N}(0,\mathbf\Sigma_j), \quad i=1,...,n\\ 
    \epsilon_i^j &\sim Lap(0,\mathbf T_j), \quad i=1,...,n
\end{align}
where the covariance matrices are given by
\begin{align}
\mbox{Model 1: } &  
    \label{Model1}
  \mathbf{\Sigma_1} = \left(\begin{array}{cc}
                  1 & \sigma_1 \\
                  \sigma_1 & 1\end{array}\right), 
  \quad \quad \mathbf{T_1} = \left(\begin{array}{cc}
                  0.05 & 0 \\
                  0 & 0.05\end{array}\right)\\  
  \mbox{Model 2: } &
  \label{Model2}
  \mathbf{\Sigma_2} = \left(\begin{array}{cc}
                  1& \sigma_2 \\
                  \sigma_2 & 3\end{array}\right), 
  \quad \quad\mathbf{T_2} = \left(\begin{array}{cc}
                  0.061& 0  \\
                  0 & 0.0025\end{array}\right)
\end{align}
These models cover the settings where the true density is either perfectly radially symmetric or very asymmetric. The covariances $\sigma_1$ and $\sigma_2$ are parameters which we will vary to simulate both data under the null and the alternative. The choices $\mathbf{Sigma}_2$ and $\mathbf T_2$ yield a model with less regular shape, which is motivated by the empirical results for the hot Jupiter data that we will discuss further below. \\

Before we proceed with presenting the results some discussion of the numerical implementation is warranted. 
\begin{enumerate}
    \item To calculate the estimators we use the fast fourier transform (FFT) and inverse FFT implemented in scipy 1.10.1 and  1.11.4 \cite{scipy} with a grid size of 512x512 for model 1 and 1024x1024 for model 2, where model 2 requires a larger grid size due to its more complicated shape and the large difference of the variances in the covariance matrix $T_2$.
    \item Naively calculating the 4 dimensional integral in the definitions of $\hat \theta$ and $\hat \theta^*$ has $O(n^4)$ runtime. Due to the lack of knownledge about the structure of the integrand in the definition of $\hat \theta^*$ it is also difficult to apply more specific algorithms with better runtimes. We therefore use Monte Carlo integration based on sampling from the distributions defined by $\hat f_n(x)$ and $\hat f_n^*(x)$ 2500 times to avoid this computational issue. To avoid further numerical problems (i.e. possible negative values or positive values at grid points very far from the center of the grid) we threshold the densities by setting the density to 0 at all grid points where their value lie below $1/1000$ of their maximum value and then normalize to guarantee an integral of 1 afterwards.
    \item As the sampling from $\hat f_n$ occurs on a grid the calculation of rank correlations suffers from the presence of ties if the grid is not chosen suitably large. For example choosing smaller but practically feasible grids leads to a substantial bias in the estimates of Kendall's $\tau$. To avoid this issue we perturbed the samples from $\hat f_n$ and $\hat f_n^*$ by samples from a uniform distribution on the interval $[0,10^{-6}]$. The results from \cite{Kitagawa2018} indicate that the bias that is introduced by this correction is of order $10^{-12}$ and therefore negligible.    
    \item For data where the error variances, say $s^2_1$ and $s_2^2$, are heterogeneous it is advisable to use different bandwidths along each coordinate. Consequently we consider
    \begin{align}
        \Phi_K(h(s_2s_1^{-1}t_1,t_2))
    \end{align}
    instead of $\Phi_K(ht)$ in \eqref{eq:est}. This has no impact on the asymptotic theory beyond notational inconvenience.    
\end{enumerate}

\textbf{Choice of bandwidth:}\\
Clearly the selection of the bandwidth $h$ is crucial for the performance of the proposed method. In our simulations we have used a method similar in spirit to that used by \cite{bissantz} which is based on the following observation: Denoting by $h_{opt}$ the bandwidth that minimizes the $L^2$ distance between $\hat f_n$ and $f$ we observe that for over-smoothing bandwidths $h>h_{opt}$ the behavior of the estimator changes only slowly with increasing bandwidth as more and more features of the distribution get smoothed out. Conversely, for bandwidths $h<h_{opt}$, we include frequencies of greater magnitude in the estimator \eqref{eq:est}. For these frequencies the characteristic function $\Phi_\psi$ takes on small values, exacerbating the fluctuations in $\hat \Phi_n$ and thereby leading to increasingly strong fluctuations in the estimator that are not present in the true density, yielding a steep increase in the $L^2$ error when the bandwidth decreases below $h_{opt}$. Consequently, letting $\hat f_n(h)$ denote the estimator $\hat f_n$ for the bandwidth $h$, we expect that the quantity 
    \begin{align}
        \norm{\hat f_n(h_1)-\hat f_n(h_2)}
    \end{align}
    behaves as follows: it will be relatively small for $h_1,h_2>h_{opt}$, mostly depending on the distance $|h_1-h_2|$ between the bandwidths. As soon as at least one of the bandwidths falls below $h_{opt}$ it has a sharp spike as the estimators starts overfitting. We therefore choose the bandwidth as follows: Consider a log-spaced sequence $h_1,...,h_{m}$ of bandwidths and define
    \begin{align}
        \hat h_{opt}=\arg \min_{1 \leq h_i \leq m-1}\underbrace{\norm{\hat f_n(h_i)-\hat f_n(h_{i+1})}_2}_{=:D_i(h_i)}
    \end{align}
    Figure \ref{fignorm500} illustrates empirically that this choice coincides fairly well with the minimum of the global mean square error $\norm{f-\hat f_n(h)}$ for the case of model 2 with $n=500$ as an example.

\begin{figure}[H]
\begin{center}   
\includegraphics[width=4.5in]{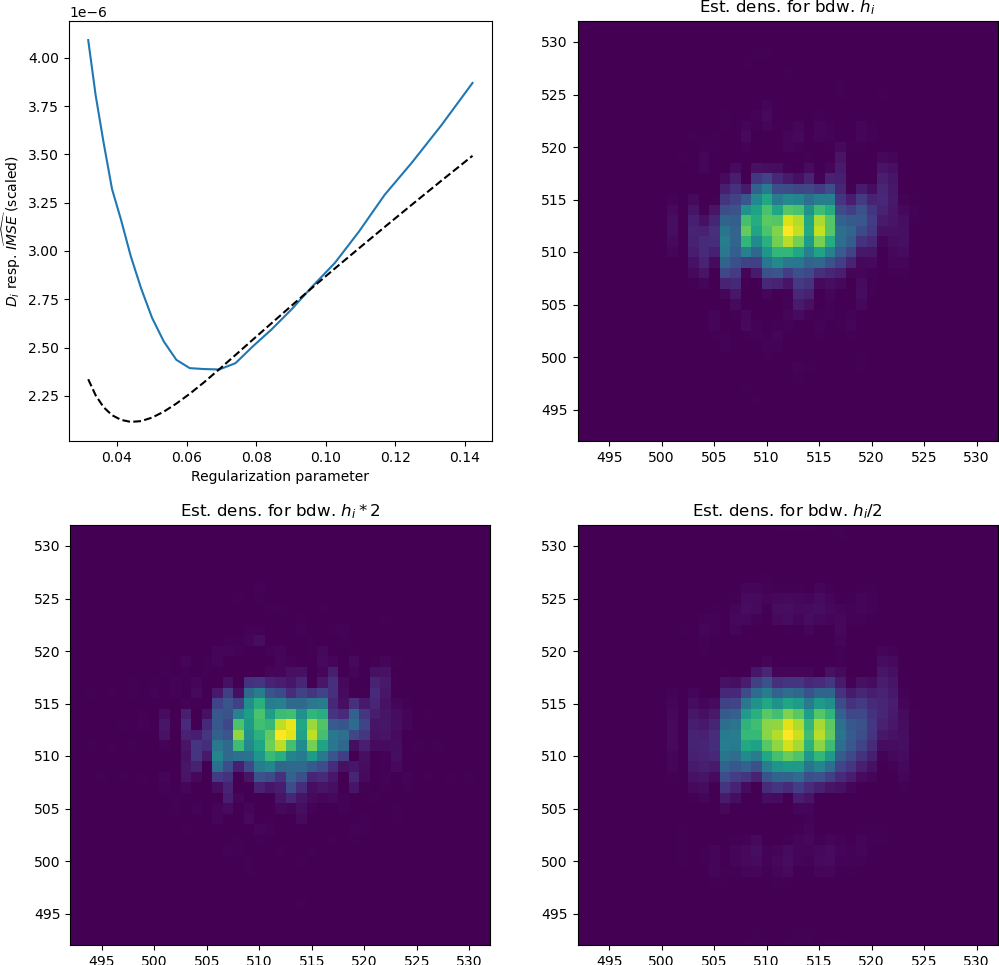}
\end{center}
\caption{\label{fignorm500} Results for Model\eqref{Model2} and $n=500$. We plot $D_i(h_i)$ and $\widehat{IMSE}(h_i)$ against $h_i$. The other graphs contain heatmaps of the estimated densities for the regularization parameters $\frac{j}{2}h_i^{\rm opt}, j=1,2,3$ where $h_i^{\rm opt}$ minimizes $(D_i)_{i=1,...,m}$.}
\end{figure}

\subsection{Simulation results}

We consider the sample sizes $n=100$ and $500$ and apply the procedures \eqref{boottest} and \eqref{ConfInt} to $250$ datasets to determine empirical rejection and coverage rates for each Model and choice of $\sigma_i$. For each dataset we generated $250$ bootstrap samples to determine the critical value $q_{1-\alpha}^*$. Regarding the choice of $\Delta$ we consider 
\begin{align}
    \Delta_1&=0.333\\
    \Delta_2&=0.037
\end{align} for Kendall's $\tau$ in Model 1 and 2, respectively.
For Spearman's $\rho$ we consider
\begin{align}
    \Delta_1&=0.483\\
    \Delta_2&=0.056~,
\end{align}
in Model 1 and 2, respectively. Our choices for $\sigma_1$ and $\sigma_2$ together with the respective values of Kendall's $\tau$ and Spearman's $\rho$ can be found in the table below. 


\begin{table}[H]
  \centering
   \setlength{\tabcolsep}{4pt}
  \begin{tabular}{@{} c|c|c|c|c|c|c @{}} 
  Model & \multicolumn{3}{c|}{Model 1} & \multicolumn{3}{c}{ Model 2}\\
  \hline
    &\small $H_0$ &\small   $H_1^A$ & \small $H_1^B$ &\small $H_0$ &\small   $H_1^A$ & \small $H_1^B$\\ 
  \hline
    \small$\sigma$&\small 0.5 &\small   0.55 & \small 0.7 &\small 0.1 & \small  0.11 & \small 0.23 \\ 
    \hline
    \small Kendall's $\tau$  & 0.333 & 0.370 & 0.493 & 0.037 & 0.040 & 0.085 \\ 
    \small Spearman's $\rho$ & 0.483 & 0.531 & 0.682 & 0.056 & 0.061 & 0.127 \\ 
  \end{tabular}
  \caption{Rank correlations for the Models \eqref{Model1} and \eqref{Model2} for different choices of $\sigma_1$ and $\sigma_2$}
  \label{tab:corralternatives}
\end{table}

For each model the smallest $\sigma$ corresponds to the boundary of the null hypotheses $H_0(\Delta)$ while the second and third choices are part of the alternative $H_1(\Delta)$. 
We have selected the initial bandwidths for the simulations with the method discussed in the previous section. For a discussion of the robustness properties of this choice see the discussion of table \ref{tab:labeldatad} further below). We present the results in tables \ref{tab:labelkendall} and \ref{tab:labelspearman} below.

\begin{table}[H]
  \centering
  \begin{tabular}{c|c|c|c|c|c|c|c}
    n & h ($H_0$) & $H_0$ & h ($H_1^A$) & $H_1^A$ & h ($H_1^B$) & $H_1^B$& CI\\ 
    \midrule
    100 &  0.064 & 0.032 & 0.0507 & 0.024 & 0.0629 & 0.384 & 0.896\\ 
    500 & 0.0845 & 0.044 & 0.0737 & 0.18 & 0.0915 & 0.98 & 0.948 \\ 
    100 & 0.045 & 0.068 & 0.0399 & 0.132 & 0.0448 & 0.468 & 0.952 \\ 
    500 & 0.072 & 0.096 & 0.0693 & 0.272 & 0.0693 & 0.88 & 0.98\\ 
  \end{tabular}
  \caption{\label{tab:labelkendall}Empirical rejection rates of the test $\eqref{boottest}$ for Kendall's $\tau$ for Model \eqref{Model1} (rows 1-2, $\Delta=0.333$) and Model \eqref{Model2} (rows 3-4, $\Delta=0.037$) . Column CI contains empirical coverage probabilities for two-sided $95\%$ confidence intervals. Bandwidths shown are for the simulations under $H_0$ (on which CI coverage rates are also based) and $H_1^A$, $H_1^B$, respectively.}
\end{table}

\begin{table}[H]
  \centering
  \begin{tabular}{c|c|c|c|c|c|c|c}
    n & h ($H_0$) & $H_0$ & h ($H_1^A$) & $H_1^A$ & h ($H_1^B$) & $H_1^B$& CI\\ 
    \midrule
    100 & 0.064  & 0.02  & 0.0507 & 0.02 & 0.0629 & 0.256 & 0.896 \\
    500 & 0.0845 & 0.048 & 0.0737 & 0.176 & 0.0915 & 0.976 &  0.948 \\
    100 & 0.045 & 0.064 & 0.0399 & 0.128 & 0.0448 & 0.472 &  0.948 \\
    500 & 0.072 & 0.084 &  0.0693 & 0.28 & 0.0693 & 0.892 & 0.972 \\
  \end{tabular}
  \caption{\label{tab:labelspearman} Empirical rejection rates of the test $\eqref{boottest}$  for Spearman's $\rho$ for Model \eqref{Model1} (rows 1-2, $\Delta=0.483$) and Model \eqref{Model2} (rows 3-4, $\Delta=0.056$). Column CI contains empirical coverage probabilities for two-sided $95\%$ confidence intervals. Bandwidths shown are for the simulations under $H_0$ (on which CI coverage rates are also based) and $H_1^A$, $H_1^B$, respectively.}
\end{table}

The method thus performs very well for the symmetric model and seems to have a slightly inflated size for the more complex radially asymmetric model as can been in particular from the $H_0$ column in tables \ref{tab:labelkendall} and \ref{tab:labelspearman} and taking into account that the width of an approximate asymptotic $95\%$ confidence interval for an estimate of the probability of a Bernoulli variable with true probability of $5\%$ is
\[\pm z_{0.975}\cdot\sqrt{\frac{0.05\cdot (1-0.05)}{250}}\approx 2.7\%.\]

The confidence intervals tend to be slightly conservative. Confidence intervals based on $\overline{\hat\tau^{B,\cdot}}$ instead of $\hat \tau$ are a natural alternative that we omitted on account of them being overly liberal. \\

Finally we consider the sensitivity of the procedure with regards to the bandwidth choice in table \ref{tab:labeldatad}. To that end we consider the bandwidths $(1\pm0.1)\hat h_{opt}$ and investigate how the approximation of the nominal level in the settings of the tables \ref{tab:labelkendall} and \ref{tab:labelspearman} is impacted. The choice of $\pm 0.1$ is motivated by the width of the local minimum of $D_i$ in Figure \ref{fignorm500}. We also considered a small number of further bandwidths within a range of $\pm 0.01$ of $h_{opt}$ and observed little variation in the results. 

\begin{table}[H]
  \centering
  \begin{tblr}{@{} c|c|c|c @{}}
    Model & h & level Kendall's $\tau$ & level Spearman's corr.\\ 
    \midrule
    1 & 0.07605  & 0.056  & 0.072 \\
    1 & 0.09295 & 0.048 & 0.048 \\
    2 & 0.0648 & 0.064 &  0.068\\
    2 & 0.0792 & 0.088 & 0.072\\
  \end{tblr}
  \caption{\label{tab:labeldatad}Results for Model \eqref{Model1} (rows 1-2) and Model \eqref{Model2} (rows 3-4) for bandwidths $10\%$ above respectively below the values used in Tables \ref{tab:labelkendall} and \ref{tab:labelspearman}, we consider only the sample size $n=500$.}   
\end{table}
We observe that the approximation of the nominal level seems to be robust to reasonably sized perturbations of the bandwidth, indicating that we may combine the proposed bandwidth selection algorithm and the test to a fully data driven method.


\section{Correlation between stellar activity and planetary surface gravity for Hot Jupiters}
\label{hotjupiter}

Before we present the data set we consider in detail we will make some general remarks about it and the problems we consider. Hot Jupiters are a class of stellar objects that have been observed for the first time fairly recently (see \cite{mayquel95}). While the possibility of their existence had already been considered by \cite{struve52} they were not predicted by any of the common planetary system formation models (see \cite{anrev18}, \cite{JGR21} for more details). To close these gaps gaining an understanding of the mechanism by which they come to be is a natural course of action. It is therefore paramount to identify the physical quantities which are potentially relevant to this mechanism.  A first major step in this direction often is an analysis of the correlations between different physical quantities that are plausible candidates for explanatory mechanism. This avenue of approach is of course not unique to this situation and is pursued in many other fields of science \cite{Russo2007, Cartwright2007} where candidates for causal connections are identified by an analysis of correlations. A high or low correlation may then be a good indicator to determine whether or not further study might be worthwhile.\\ 

As already mentioned in the introduction Hot Jupiters have fairly large masses combined with a small orbit around their parent star. These characteristic make them ideal targets for radial-velocity based detection methods as the induced variability of the radial velocity of the parent star is substantially larger than for more remote planets. Additionally the inclination of their orbital plane is often close to 90 degrees which makes them favorable candidates for detection by the transit method due to the resulting brightness variability of the parent star. Since the discovery of the first Hot Jupiter 51 Peg b in 1995, which also was the first discovery of an exoplanet in general, some few hundred hot Jupiters have been found. \\

In view of the preceding discussion it comes at no surprise that several physical characteristics of Hot Jupiters and their parent stars haven been investigated for correlations. Among these, a classic example is the potential correlation between stellar activity and planetary surface gravity as in \cite{figueria14}. We use the data presented in their study, which consists of data for 108 hot Jupiters with both quantities available. The data is sourced from two works, the first contains data for 39 hot Jupiters \cite{hartman2010} while the second \cite{schneider2011} contains another 69 additional pairs of observations. \\

To apply our methods we need to \textbf{specify the error distribution} in model \eqref{additivemodel}. In the following we will consider 
\begin{align}
    \epsilon_i \sim Lap(0,\Sigma),
\end{align}
where 
\begin{align}
    \Sigma=\begin{pmatrix}
        0.0036 & 0\\ 0 & 0.0025
    \end{pmatrix}
\end{align} 
These values are motivated by the discussion in \cite{figueria14}, who summarizes the available information on the data error from different sources. They claim a difference of order $\pm 0.05$ for the logarithmic planetary surface gravity based on the possible impact of stellar spots on estimation of the planetary circumference. For the stellar activity $\log(R'_{HK})$ the error is apparently difficult to estimate and also missing for some data sources, \cite{figueria14} quote approximate errors for measurements of $\log(R’_{HK})$ to be $0.02-0.1$ dex where available. We thus decided to use $0.06$ dex as a compromise value. We note that using more extreme values of $0.04$ resp. $0.08$ does not have a major impact on the results of our analysis ($90\%$-CI for Kendalls $\tau$ of $(-0.010,  0.185)$ and $(0.011, 0.259)$, resp. for $0.04$ and $0.08$ dex, and $(-0.014, 0.270)$ and $(0.011, 0.372)$, resp. for the Spearman correlation, see the discussion further below for context).

The Laplace distribution is a common choice when modeling multivariate data with tails heavier than a normal distribution that still has finite moments (\cite{kotz01}). It is also a more robust choice in case of misspecification than the normal or other supersmooth distributions \cite{Meister2004}. We also remark that previous studies (cf. e.g. \cite{Hesse1999}) have observed empirically that deconvolution methodology is usually rather robust to distributional assumptions. \\

Due to the relatively low sample size we did not calculate the \textbf{regularization bandwidth $h$} based on the true sample. We instead opted to use  the optimal regularization parameter for $n=100$ in model 2, which assumes the same error distribution but assumes that the latent data are normally distributed. For real data the data distribution can be very irregular, either due to insufficient sample size or true physical characteristics of the density to be estimated. This makes it difficult to be used for simulations, e.g. for our proposed regularization parameter selection method. Here we propose to use mock densities in this case, e.g. a bivariate normal distribution with similar covariance matrix as estimated from the data. 

 Our results for the Hot Jupyter data are shown in Table \ref{tab:hotJupiter}. We have also performed an analysis of the sensitivity of the result on the assumed regularization parameter by repeating the analysis with $1.5\times$ and $2/3 \times$ the chosen bandwidth and observed that the general conclusion of the analysis were not impacted by this.

\begin{table}[H]
  \centering
  \begin{tblr}{@{} c|c|c @{}}
    Kendall's tau (and conf. int.), $\hat \Delta_{\min}$ & $\rho_{sp}$ (and conf. int.), $\hat \Delta_{\min}$ & Reg. parameter\\ 
    \midrule
    $0.115$ $([0.013,\,0.217])$, $0.043$ & $0.164$ $([0.014,\,0.314])$, $0.057$ & $h_i^{\rm opt}$\\
    $0.112$ $([0.011,\,0.213])$, $0.044$ & $0.160$ $([0.013,\,0.308])$, $0.065$ & $h_i^{\rm opt}\times (2/3)$\\
    $0.130$ $([0.013,\,0.248])$, $0.038$ & $0.186$ $([0.013,\,0.359])$, $0.049$ & $h_i^{\rm opt}\times 1.5$\\    
  \end{tblr}
  \caption{Point estimates and $95\%$-confidence bands  of the rank correlations of the Hot Jupiter data set. We also record the largest $\Delta$, for which $H_0(\Delta)$ (see \eqref{hyp}) is rejected,. Results are displayed for the bandwidths $2/3 h_i^{opt},h_i^{opt},3/2h_i^{opt} $.}
  \label{tab:hotJupiter}
\end{table}

Let us compare these results with those obtained in \cite{figueria14}. The authors did not take potential measurement errors into account and find Spearman correlations of $0.45$ and $0.21$ for the first 39 observations and the full data set, respectively. They reject the null of no correlation based on a permutation procedure. While we also find that our confidence bands do not contain 0 (albeit barely so) a closer look at $\hat \Delta_{\min}\sim 0.055$ tells a more complete story - we in fact only have sufficient evidence for a very small spearman correlation. We also note that the correlation between $\log(R'_{HK})$ and $\log(g_p)$ appears to be somewhat difficult to understand from a physical perspective and its significance is not clear \cite{anrev18}. This is consistent with our results which indicate that, while statistically significant in the classical sense, no relevant correlation exists and illustrates 
\begin{enumerate}
    \item[(1)] that in the construction of confidence bounds or inference accounting for observational errors is important even if the observation errors are fairly small on a first glance.
    \item[(2)] that adopting a relevant hypothesis framework facilitates meaningful discussion of the physical implications of statistical conclusions beyond merely deciding whether or not a particular observation is consistent with an independence assumption.
\end{enumerate}

\textbf{Acknowledgements}
This research was partially funded in the course of TRR 391 Spatio-temporal Statistics for the Transition of Energy and Transport (520388526) by the Deutsche Forschungsgemeinschaft (DFG, German Research Foundation).


\ \\

\ \\
\section{Proofs}
\label{proofsappendix}
\textbf{Proof of Theorem \ref{t1}}
\begin{proof}
    From Lemma \ref{pl1} we obtain that
    \begin{align}
        \sqrt{n}h^{1+\beta}(\hat \theta -\E[\hat \theta])=\sqrt{n}h^{1+\beta}(\hat \theta -\theta)+o_\p(1)~
    \end{align}
    Lemma \ref{pl3} yields
    \begin{align}
        \sqrt{n}h^{1+\beta}(\hat \theta -\E[\hat \theta])=2\sqrt{n}h^{1+\beta}\int_{\R^2} k_y(x)(f_n(x)-\E[f_n(x)])dx+o_\p(1)
    \end{align}
    Lemma \ref{pl4} and \ref{pl5} then yield
    \begin{align}
     2\sqrt{n}h^{1+\beta}\int_{\R^2} k_y(x)(f_n(x)-\E[f_n(x)])dx=2\sqrt{n}hS_n+o_\p(1)~.
    \end{align}
    The proof is finished by combining the previous equations with Theorem \ref{t2}. 
\end{proof}

\begin{Lemma}
\label{pl1}
    Assume that (A1) to (A5) hold. Then
    \begin{align}
        |\E[\hat \theta_n]-\theta| \lesssim O(n^{-1}h_n^{-2-2\beta}+h^2)
    \end{align}
\end{Lemma}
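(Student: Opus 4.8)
The plan is to treat $\hat\theta$ as a $V$-statistic and compute its expectation exactly before estimating the remainder. Writing $W_k:=(X_k,Y_k)$ and substituting $x=W_k+hu$, $y=W_l+hv$ in the double integral defining $\hat\theta$ gives
\begin{align}
\hat\theta=\frac1{n^2}\sum_{k,l=1}^n\int_{\R^2}\int_{\R^2}k(W_k+hu,W_l+hv)K_n(u)K_n(v)\,du\,dv~.
\end{align}
Splitting the sum into the $n(n-1)$ off-diagonal pairs $k\neq l$ and the $n$ diagonal pairs $k=l$, taking expectations, and using independence together with the identity $\int_{\R^2}K_n((x-w)/h)f(w)\,dw=h^2\,\bar f_n(x)$ where $\bar f_n:=\E[\hat f_n]$, one obtains
\begin{align}
\E[\hat\theta]=\Big(1-\frac1n\Big)\int_{\R^2}\int_{\R^2}k(x,y)\bar f_n(x)\bar f_n(y)\,dx\,dy+\frac1n B_n~,
\end{align}
with $B_n:=\E\big[\int_{\R^2}\int_{\R^2}k(W_1+hu,W_1+hv)K_n(u)K_n(v)\,du\,dv\big]$. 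Because $\E[e^{i\langle t,Z_1\rangle}]=\Phi_f(t)\Phi_\psi(t)$, the error characteristic function cancels in $\E[\hat f_n]$, so $\bar f_n=\mathcal K_h\ast f$ is the ordinary kernel–smoothed density, with $\mathcal K$ having Fourier transform $\Phi_K$ and $\mathcal K_h(\cdot)=h^{-2}\mathcal K(\cdot/h)$.

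Two of the three resulting error terms are then easy. For the diagonal term, (A5) gives $|k|\le M$ and the stated bound $h^\beta|K_n(z)|\lesssim(1+\norm z)^{-m}$ with $m\ge 3$ gives $\norm{K_n}_1\lesssim h^{-\beta}$, so $\frac1n|B_n|\le\frac Mn\norm{K_n}_1^2\lesssim n^{-1}h^{-2\beta}\le n^{-1}h^{-2-2\beta}$. For the self-interaction correction, Young's inequality and the finiteness of $\norm{\mathcal K}_1$ (which holds since $\Phi_K\in C^m$, $m\ge 3>2$, is compactly supported, forcing $\mathcal K(u)=O((1+\norm u)^{-m})$) give $\norm{\bar f_n}_1\le\norm{\mathcal K}_1<\infty$, hence $\frac1n\big|\int\!\!\int k\,\bar f_n\bar f_n\big|\le\frac Mn\norm{\bar f_n}_1^2=O(n^{-1})$.

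The substantive part is the smoothing bias $\int\!\!\int k\,\bar f_n\bar f_n-\theta$. I would write $\bar f_n=f+b_n$ with $b_n:=\mathcal K_h\ast f-f$ and expand $\bar f_n(x)\bar f_n(y)-f(x)f(y)=b_n(x)f(y)+f(x)b_n(y)+b_n(x)b_n(y)$. Integrating against $k$, the symmetry of $k$ and the boundedness of $x\mapsto\int k(x,y)f(y)\,dy$ by $M$ reduce the cross terms to $2\int_{\R^2}b_n(x)\big(\int k(x,y)f(y)dy\big)dx$, bounded by $2M\norm{b_n}_1$, while the remaining term is bounded by $M\norm{b_n}_1^2$. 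It then suffices to show $\norm{b_n}_1\lesssim h^2$: since $\Phi_K\equiv 1$ near the origin all moments of $\mathcal K$ vanish, so $b_n(x)=\int_{\R^2}\mathcal K(u)\big(f(x-hu)-f(x)+hu\cdot\nabla f(x)\big)du$, and estimating the second–order Taylor remainder in $L^1$ via the translation bound $\norm{\nabla f(\cdot-w)-\nabla f}_1\le\norm w\,\norm{D^2 f}_1$ (with $D^2 f\in L^1$ by (A3)) yields $\norm{b_n}_1\lesssim h^2\int\norm u^2|\mathcal K(u)|\,du$, the last integral being finite thanks to the decay of $\mathcal K$ inherited from the $C^m$-smoothness of $\Phi_K$. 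Collecting the three estimates gives $|\E[\hat\theta_n]-\theta|\lesssim n^{-1}h^{-2-2\beta}+h^2$.

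The step I expect to be delicate is the last one: in contrast with ordinary kernel density estimation one cannot Taylor–expand the kernel $k$, because the correlation measures of interest (e.g.\ Kendall's $\tau$) have discontinuous kernels, so all the smoothness used in the bias estimate must be transferred from $f$ through the convolution, and one has to verify carefully that the flat-top kernel $\mathcal K$ has enough spatial decay for the moments and the second–order remainder appearing along the way to be integrable — this is precisely what assumptions (A1) and (A3) are designed to guarantee. The remaining bookkeeping, keeping the three scales $n^{-1}h^{-2\beta}$, $n^{-1}$ and $h^2$ apart, is routine once the $V$-statistic decomposition is in place.
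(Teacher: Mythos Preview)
Your proof is correct and follows essentially the same route as the paper: decompose $\hat\theta$ as a $V$-statistic, bound the diagonal contribution by $n^{-1}\norm{K_n}_1^2\lesssim n^{-1}h^{-2\beta}$ using (A5) and the decay of $K_n$, and reduce the remaining bias to $\norm{\bar f_n-f}_1=O(h^2)$. You actually give more detail than the paper on this last step---the paper simply invokes a standard kernel-bias result---and your separate bookkeeping of the $(1-\tfrac1n)$ factor is a cleaner way of handling what the paper calls ``reintroducing the diagonal sum''.
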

\begin{proof}
    We first note that
    \begin{align}
       f_n(x)f_n(y)&=\frac{1}{nh^2}\sum_{k=1}^nK_n((x-(X_k,Y_k))/h)\frac{1}{nh^2}\sum_{k=1}^nK_n((y-(X_k,Y_k))/h)\\
       &=\frac{1}{n^2h^4}\sum_{k\neq j}^nK_n((x-(X_k,Y_k))/h)K_n((y-(X_l,Y_l))/h)\\
       &+\frac{1}{n^2h^4}\sum_{k=1}^nK_n((x-(X_k,Y_k))/h)K_n((y-(X_k,Y_k))/h)       
    \end{align}
    and observe that each summand is bounded by a multiple of $h^{-\beta}$. Using that $k$ is bounded then yields that 
    \begin{align}
         \hat \theta_n&=\int_{\R^2}\int_{\R^2}k(x,y)\frac{1}{n^2h^4}\sum_{k\neq j}^nK_n((x-(X_k,Y_k))/h)K_n((y-(X_l,Y_l))/h)dxdy\\
         &\quad +O((\sqrt{n}h^{1+\beta})^{-2})
    \end{align}
    Take expectations, use Fubini and then independence. After this a similar calculation as above can be used to reintroduce the diagonal sum, thereby yielding that
    \begin{align}
        \E[\hat \theta_n]= \int_{\R^2}\int_{\R^2}k(x,y)\E[f_n(x)]E[f_n(y)]dxdy+O((\sqrt{n}h^{1+\beta})^{-2})~.
    \end{align}
    Consequently we obtain the following bound
    \begin{align}
        |\E[\hat \theta_n]-\theta|&= \Big|\int_{\R^2}\int_{\R^2}k(x,y)\Big(\E[f_n(x)]E[f_n(y)]-f(x)f(y)\Big)dxdy\Big|+O((\sqrt{n}h^{1+\beta})^{-2})\\
        &\leq  \norm{k}_\infty \norm{\E[f_n(x)]E[f_n(y)]-f(x)f(y)}_{1,\R^2\times \R^2}+O((\sqrt{n}h^{1+\beta})^{-2})
    \end{align}
    Thus the we are left with finding bounds for
    \begin{align}
        \norm{\E[f_n(x)]E[f_n(y)]-f(x)f(y)}_{1,\R^2\times \R^2}&\lesssim \norm{\E[f_n(x)]-f(x)}_1
    \end{align}
    That this term is of order $h^2$ follows by a standard analysis as in Theorem 24.1 from \cite{van1996}.
\end{proof}

\begin{Lemma}
\label{pl2}
    Suppose that
    \begin{align}
        \int_{\R^2}\sqrt{F(t)(1-F(t)}dt<\infty
    \end{align}
    Then
    \begin{align}
         \norm{\E[f_n(x)]-f_n(x)}_1 \lesssim O_\p\Big(n^{-1/2}h^{-\beta-1}\Big)
    \end{align}
\end{Lemma}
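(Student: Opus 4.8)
The plan is to bound the expected $L^1$ error by passing to the Fourier side, where the deconvolution estimator has a tractable structure, and then to control the resulting stochastic fluctuations via a variance computation. Write $f_n(x)-\E[f_n(x)] = \frac{1}{4\pi^2}\int_{\R^2} e^{-i\langle t,x\rangle}\Phi_K(ht)\frac{\hat\Phi_n(t)-\Phi_g(t)}{\Phi_\psi(t)}\,dt$, so that the centered estimator is the inverse Fourier transform of $\Phi_K(ht)\Phi_\psi(t)^{-1}(\hat\Phi_n(t)-\Phi_g(t))$. Since $\hat\Phi_n(t)-\Phi_g(t) = n^{-1}\sum_{j=1}^n (e^{i\langle t,Z_j\rangle}-\Phi_g(t))$ is a centered empirical average, we have $\E|\hat\Phi_n(t)-\Phi_g(t)|^2 = n^{-1}(1-|\Phi_g(t)|^2) = n^{-1}F(t)(1-F(t))$ in the notation of the lemma (with $F$ playing the role of the modulus-type quantity appearing in (A3)); I would first reconcile this with the exact definition of $F$ used in (A3), which presumably is $|\Phi_g|$ or a closely related object.

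Next I would relate the $L^1$ norm of the function to an integral on the Fourier side. The cleanest route is a Cauchy--Schwarz / Plancherel argument weighted by an integrable envelope: for any $w \in L^1(\R^2)$ with $w>0$,
\begin{align}
\E\norm{f_n-\E f_n}_1 &\le \E\int_{\R^2}\frac{|f_n(x)-\E f_n(x)|}{\sqrt{w(x)}}\sqrt{w(x)}\,dx\\
&\le \Big(\int w\Big)^{1/2}\Big(\E\int_{\R^2}\frac{|f_n(x)-\E f_n(x)|^2}{w(x)}\,dx\Big)^{1/2},
\end{align}
but a more direct approach, closer to the one in \cite{bissantz}, is to bound $\norm{f_n-\E f_n}_1$ by controlling the estimator on a large ball $\{\norm{x}\le R_n\}$ via Cauchy--Schwarz and Plancherel, and the complement via the decay $h^\beta|K_n(z)|\lesssim(1+\norm z)^{-m}$ noted after the Assumptions. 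On the ball, Plancherel gives $\E\int_{\norm x\le R_n}|f_n-\E f_n|^2\,dx \lesssim \int_{\R^2}|\Phi_K(ht)|^2|\Phi_\psi(t)|^{-2}\,\E|\hat\Phi_n(t)-\Phi_g(t)|^2\,dt \lesssim n^{-1}\int_{\norm t\le 1/h}|\Phi_\psi(t)|^{-2}F(t)(1-F(t))\,dt$, using (A1) to truncate the integral at $\norm t \le 1/h$. By (A2), $|\Phi_\psi(t)|^{-2}\asymp \norm t^{2\beta}$ for large $\norm t$, so on $\norm t\le 1/h$ this is $\lesssim h^{-2\beta}$ times $\int \sqrt{F(1-F)}\,dt$-type mass; multiplying by the volume $R_n^2$ lost in Cauchy--Schwarz and choosing $R_n$ a slowly growing factor, one arrives at the claimed order $n^{-1/2}h^{-\beta-1}$ (the extra power $h^{-1}$ coming from the two-dimensional volume factor $(1/h)^{?}$ in the frequency integral versus the $L^1$-to-$L^2$ conversion). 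The finiteness hypothesis $\int\sqrt{F(t)(1-F(t))}\,dt<\infty$ is exactly what makes the $t$-integral of $|\Phi_\psi(t)|^{-1}\sqrt{F(1-F)}$ converge after pulling out the worst-case rate $\norm t^\beta$, so I would organize the bound so that this integral appears as a constant.

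The main obstacle is bookkeeping the exponents: getting precisely $h^{-\beta-1}$ rather than $h^{-\beta}$ or $h^{-\beta-2}$ requires careful accounting of where the two spatial dimensions enter — one must be careful whether the $L^1$ control is obtained through a Cauchy--Schwarz step that costs a factor $\sqrt{\mathrm{vol}}$, through the Riemann--Lebesgue envelope $(1+\norm z)^{-m}$ with $m$ large (which keeps $\norm{K_n}_1$ bounded independent of $h$ after the $h^\beta$ rescaling), or through a direct $L^1$ bound on the kernel. I expect the right argument is: write $f_n-\E f_n = (nh^2)^{-1}\sum_j (K_n((x-Z_j)/h)-\E[\cdots])$, bound $\E\norm{f_n-\E f_n}_1 \le (nh^2)^{-1}\E\norm{\sum_j(\cdots)}_1$ and then use a variance-type / Jensen bound $\E\norm{\cdot}_1 \le (\int \mathrm{Var}(\cdot)(x)\,dx)^{1/2}\cdot(\text{const})$ only after localizing, so that the surviving constant is $\norm{h^\beta K_n}_2^2 \asymp \int_{\norm t\le 1/h}|\Phi_\psi(t/h)|^{-2}h^{2\beta}\,dt$-type quantity, which by (A2) is $O(h^{-2})$, giving $\E\norm{f_n-\E f_n}_1 \lesssim n^{-1/2}h^{-2}\cdot h^{\beta}\cdot h^{-\beta}\cdot\ldots$ — and one tracks the powers until $n^{-1/2}h^{-\beta-1}$ emerges. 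Modulo this exponent-chasing, every step is a standard deconvolution estimate, and the hypothesis on $F$ is precisely the integrability needed to make the frequency integral finite.
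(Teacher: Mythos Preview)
Your proposal rests on a misidentification of $F$. In the paper $F$ denotes the bivariate cumulative distribution function of the observations (so that $F_n-F$ is the centered empirical CDF), not $|\Phi_g|$ or any Fourier-side quantity. The hypothesis $\int_{\R^2}\sqrt{F(t)(1-F(t))}\,dt<\infty$ is the standard condition guaranteeing $\E\norm{F_n-F}_1\le n^{-1/2}\int\sqrt{F(1-F)}<\infty$ via the pointwise bound $\E|F_n(t)-F(t)|\le n^{-1/2}\sqrt{F(t)(1-F(t))}$. Your Fourier-side variance computation $\E|\hat\Phi_n(t)-\Phi_g(t)|^2=n^{-1}(1-|\Phi_g(t)|^2)$ is correct but has nothing to do with this hypothesis, so your argument never actually uses the assumption of the lemma. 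Moreover, the Plancherel/Cauchy--Schwarz route you sketch cannot close: the $L^2$ bound $\E\norm{f_n-\E f_n}_2^2\lesssim n^{-1}\int_{\norm t\le 1/h}\norm t^{2\beta}\,dt\asymp n^{-1}h^{-2\beta-2}$ does give the right order in $L^2$, but converting to $L^1$ over $\R^2$ via Cauchy--Schwarz on a ball of radius $R_n$ costs a factor $R_n$, and you need $R_n\to\infty$ to handle the complement, so you overshoot the claimed rate. Your final paragraph acknowledges that the exponent bookkeeping is unresolved; it is not merely bookkeeping---the method does not deliver $n^{-1/2}h^{-\beta-1}$.

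The paper's argument is entirely spatial and avoids these issues. One writes $f_n-\E f_n=h^{-2}\int K_n((x-y)/h)\,d(F_n-F)(y)$, integrates by parts to replace $d(F_n-F)$ by $(F_n-F)$ against the partial derivatives $K_{ni}$ of the kernel (picking up one factor $h^{-1}$ from the chain rule), and then applies Young's convolution inequality to get $\norm{f_n-\E f_n}_1\lesssim h^{-1}\sum_i\norm{K_{ni}}_1\norm{F_n-F}_1$. The hypothesis of the lemma now gives $\norm{F_n-F}_1=O_\P(n^{-1/2})$ by Markov, and the Riemann--Lebesgue decay implied by (A1)--(A2) gives $\norm{K_{ni}}_1\lesssim h^{-\beta}$. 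This is where the rate $n^{-1/2}h^{-\beta-1}$ comes from cleanly: one $h^{-1}$ from differentiation, one $h^{-\beta}$ from $\norm{K_{ni}}_1$. The integration-by-parts step is the idea you are missing.
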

\begin{proof}
    Note that $\E[f_n(x)]-f_n(x)=h^{-2}(K_n \star (F_n-F))(x)$, we then obtain by partial integration, Young's convolution inequality and Markovs inequality the following bounds
    \begin{align}
        \int_{\R^2}\int_{\R^2}\Big|K_n\Big(\frac{x-y}{h}\Big)d(F_n-F)(y)\Big|dx&=\sum_{i=1}^2\int_{\R^2}\Big|\int_{\R^2}h^{-1}K_{ni}\Big(\frac{x-y}{h}\Big)(F_n(y)-F(y))dy\Big|dx\\
        &\leq\sum_{i=1}^2h\norm{K_{ni}}_1\norm{F_n-F}_1    \\
        &\leq \sum_{i=1}^2h\norm{K_{ni}}_1 O_\p\Big( n^{-1/2}\Big)
    \end{align}
    where $K_{ni}$ denotes the $i-$th partial derivative of $K_n$. $\Phi_K$ having compact support, sufficient smoothness and the fact that $\Phi_\Psi(t) \simeq \norm{t}^{-\beta}$ then yields (due to the Rieman-Lebesgue Lemma) that $\norm{K_{ni}}_1\lesssim h^{-\beta}$ and finishes the proof.    
\end{proof}

\begin{Lemma}
\label{pl3}
     Under assumption (A1)-(A5) we have
    \begin{align}
        sup_{y \in \R^2}|k_{x,n}(y)-k_x(y)|&\lesssim h^{-\beta-1}n^{-1/2}\\
        \Big|\int_{\R^2}(k_{x,n}(y)-k_x(y))(f_n(y)-\E[f_n(y)])dy\Big| &\lesssim h^{-2\beta-2}n^{-1} 
    \end{align}
\end{Lemma}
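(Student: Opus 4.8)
The plan is to treat the two claims separately but with a common strategy: express each quantity of interest in terms of the difference $f_n - \E[f_n]$ (equivalently, $F_n - F$ via partial integration) and then bound it using the $L^1$-bounds for $K_n$ and its derivatives established in Lemma \ref{pl2} (namely $\norm{K_{ni}}_1 \lesssim h^{-\beta}$) together with the $O_\p(n^{-1/2})$ rate of $\norm{F_n - F}_1$. First I would introduce the notation $k_{x,n}(y) := \int_{\R^2} k(x,z)\, h^{-2} K_n\big((y-z)/h\big)\,\mathrm{d}z$, the "smoothed" version of $k_x(y) = \int_{\R^2} k(x,z) f(z)\,\mathrm{d}z$ obtained by replacing $f$ with a kernel evaluation at $y$; more precisely $k_{x,n}(y)$ should be defined so that $\int k(x,y) f_n(y)\,\mathrm{d}y = \int k_{x,n}(y)\,\mathrm{d}F_n$-type identities hold, consistent with how it is used in the proof of Theorem \ref{t1}. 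The key observation is that $k_{x,n} - k_x = h^{-2} K_n \star (k_x \,\mathrm{d}(\text{something}))$; one writes $k_{x,n}(y) - k_x(y)$ as a convolution of $k_x$ (a bounded function, by (A5)) against $h^{-2}K_n(\cdot/h) - \delta$, and then—since $K_\infty$ integrates to $1$ after the $h^\beta$ rescaling—uses (A4) plus the standard bias expansion to control the deterministic part, but that is the bias; the stochastic fluctuation comes in only in the second claim.

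For the \textbf{first claim} (the sup-norm bound $\sup_y |k_{x,n}(y) - k_x(y)| \lesssim h^{-\beta-1} n^{-1/2}$), I would write $k_{x,n}(y) - k_x(y) = \int k(x,z)\, h^{-2} K_n\big((y-z)/h\big)\,\mathrm{d}(F_n - F)(z) + (\text{bias term})$, integrate by parts in $z$ to move the derivative onto $K_n$, and bound the result by $\norm{k}_\infty \cdot h^{-2} \cdot h \cdot \norm{K_{ni}}_1 \cdot \norm{F_n - F}_\infty \lesssim h^{-1} \cdot h^{-\beta} \cdot \norm{F_n-F}_\infty$, using that the partial-derivative $L^1$-norm of $K_n(\cdot/h)$ after the $h^{-2}$ normalization is $\lesssim h^{-1}\norm{K_{ni}}_1 \lesssim h^{-1-\beta}$. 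Here I would invoke $\norm{F_n - F}_\infty = O_\p(n^{-1/2})$ (the multivariate DKW/Donsker bound) rather than the $L^1$-rate, since we want a sup over $y$. The bias term is $O(h^2)$ which is dominated under (A4). The sup over $y$ is harmless because the bound is uniform after the convolution structure is exposed — no $y$-dependence survives in the final estimate.

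For the \textbf{second claim}, $\big|\int (k_{x,n}(y) - k_x(y))(f_n(y) - \E[f_n(y)])\,\mathrm{d}y\big| \lesssim h^{-2\beta - 2} n^{-1}$, the natural route is to combine the first claim with an $L^1$-type bound on $f_n - \E[f_n]$: the integral is at most $\sup_y |k_{x,n}(y) - k_x(y)| \cdot \norm{f_n - \E[f_n]}_1 \lesssim (h^{-\beta-1}n^{-1/2}) \cdot (h^{-\beta-1}n^{-1/2}) = h^{-2\beta-2}n^{-1}$, where the second factor is exactly Lemma \ref{pl2}. This is clean provided we are content with $O_\p$; if an $L^1$-in-expectation statement is wanted one would instead expand both differences as integrals against $\mathrm{d}(F_n-F)$, yielding a (degenerate) V-statistic of order $2$ whose variance is $O(n^{-2} h^{-4\beta-4} \cdot (\text{integrability from (A3)}))$, giving the same rate in mean square. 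The \textbf{main obstacle} I anticipate is making the convolution identities for $k_{x,n}$ precise enough that the partial integration is legitimate — specifically, checking that $k_x$ and its relevant boundary behavior at infinity vanish fast enough for the integration by parts to produce no boundary terms, which relies on (A5) (boundedness of $k$) plus the decay $h^\beta|K_n(z)| \lesssim (1+\norm{z})^{-m}$ and the analogous decay for derivatives noted after Assumption (A4)–(A5). Everything else is bookkeeping with Young's inequality and the Riemann–Lebesgue-type bounds already in hand.
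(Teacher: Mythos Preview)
Your approach to the second claim is exactly the paper's: bound the integral by $\norm{k_{x,n}-k_x}_\infty\cdot\norm{f_n-\E[f_n]}_1$ and multiply the two rates. That part is fine.

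The first claim, however, is much simpler than you make it, and the complication stems from a wrong guess at the definition of $k_{x,n}$. From the bootstrap analogue $k^*_{y,n}(x)=\int_{\R^2}k(x,y)f_n^*(y)\,dy$ displayed in the paper, the intended object is
\[
k_{y,n}(x)=\int_{\R^2}k(x,y)\,f_n(y)\,dy,\qquad k_y(x)=\int_{\R^2}k(x,y)\,f(y)\,dy,
\]
i.e.\ one simply replaces $f$ by $f_n$ in the definition of $k_y$ (the subscript labels the variable that has been integrated out). With this definition the difference is
\[
k_{y,n}(x)-k_y(x)=\int_{\R^2}k(x,y)\bigl(f_n(y)-f(y)\bigr)\,dy,
\]
so by (A5),
\[
\sup_x\bigl|k_{y,n}(x)-k_y(x)\bigr|\le\norm{k}_\infty\,\norm{f_n-\E[f_n]}_1+\norm{k}_\infty\,\norm{\E[f_n]-f}_1
\lesssim h^{-\beta-1}n^{-1/2},
\]
using Lemma~\ref{pl2} for the first term and the $O(h^2)$ bias bound for the second. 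This is the paper's one-line argument.

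Your proposed definition $k_{x,n}(y)=\int k(x,z)\,h^{-2}K_n\bigl((y-z)/h\bigr)\,dz$ is a \emph{deterministic} smoothing of $k(x,\cdot)$ and cannot produce an $n^{-1/2}$ rate; and your subsequent expression $\int k(x,z)h^{-2}K_n\bigl((y-z)/h\bigr)\,d(F_n-F)(z)$ is a different object again. Even if one of these were the intended quantity, your integration-by-parts step would require differentiating the product $k(x,z)K_n(\cdot)$ in $z$, and $k$ is typically not differentiable (e.g.\ the sign/indicator kernels for Kendall's $\tau$ and Spearman's $\rho$). None of this machinery is needed once the correct definition is in hand.
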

\begin{proof}
    The first statement follows immediately by noting that 
    \begin{align}
        |k_{x,n}(y)-k_x(y)| \lesssim  \norm{\E[f_n(x)]-f_n(x)}_1     
    \end{align}
    and Lemma \ref{pl2} while the second statement follows from the first and an application of Hölder's inequality to obtain
    \begin{align}
        \Big|\int_{\R^2}(k_{x,n}(y)-k_x(y))(f_n(y)-f(y)dy\Big| \lesssim  \norm{ k_{x,n}(y)-k_x(y)}_\infty \norm{\E[f_n(x)]-f_n(x)}_1    
    \end{align}
\end{proof}

We define for some sequence $a_n \rightarrow \infty$ with $ha_n \rightarrow 0$ the asymptotic Kernel $K_\infty$ and its truncated version $\tilde K_\infty$ by
\begin{align}
    K_\infty&=\frac{1}{C4\pi^2}\int_{\R^2}\exp(-i\langle t,x \rangle  )\norm{x}^\beta \Phi_K(x)dx\\
    \tilde K_\infty&=\frac{1}{C4\pi^2}\int_{\norm{x}\leq a_n}\exp(-i\langle t,x \rangle  )\norm{x}^\beta \Phi_K(x)dx~.
\end{align}
Note that these Kernels are symmetric because $\Phi_K$ is symmetric, in particular their first moments are zero.\\

We further denote their partial derivatives by $\bar K_{\infty,i}$ and $K_{\infty,i}$, respectively. We also define the associated density estimates $f_n^{K_\infty}$ and $\tilde f_n$ by
\begin{align}
    f_n^{K_\infty}=h^{-2}(K_\infty \star F_n)\\
    \tilde f_n=h^{-2}(\tilde K_\infty \star F_n)
\end{align}
\begin{Lemma}
\label{pl4}
    Under assumptions (A1) to (A5) we have
    \begin{align}
     n^{1/2}h\norm{\tilde f_n-h^\beta f_n}_1=o(1)
    \end{align}
\end{Lemma}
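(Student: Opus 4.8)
The plan is to reduce the claim to an $L^1$ bound on the difference of the underlying kernels, convolved with the empirical distribution function, and then to split that difference into a truncation part and a "genuine kernel discrepancy" part controlled by assumption (A4). First I would use the representations $\tilde f_n = h^{-2}(\tilde K_\infty \star F_n)$ and $f_n = h^{-2}(K_n \star F_n)$ (the latter after an integration by parts, exactly as in the proof of Lemma \ref{pl2}, writing the estimator against $dF_n$ rather than as a sum of kernels). After the integration by parts we are comparing $h^{-1}\tilde K_{\infty,i}$ with $h^{-1} h^\beta K_{n,i}$ convolved with $F_n$, so by Young's convolution inequality
\begin{align}
    n^{1/2}h\,\norm{\tilde f_n - h^\beta f_n}_1 \lesssim n^{1/2} \sum_{i=1}^2 \norm{\tilde K_{\infty,i} - h^\beta K_{n,i}}_1 \,\norm{F_n}_\infty,
\end{align}
and $\norm{F_n}_\infty \le 1$. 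So everything comes down to showing $n^{1/2}\norm{\tilde K_{\infty,i} - h^\beta K_{n,i}}_1 = o(1)$ for $i=1,2$.

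Next I would insert $K_{\infty,i}$ and use the triangle inequality to split
\begin{align}
    \norm{\tilde K_{\infty,i} - h^\beta K_{n,i}}_1 \le \norm{\tilde K_{\infty,i} - K_{\infty,i}}_1 + \norm{K_{\infty,i} - h^\beta K_{n,i}}_1.
\end{align}
For the second term, note $K_{\infty,i} - h^\beta K_{n,i}$ is (up to sign) the $i$-th partial derivative of $K_\infty - h^\beta K_n$, whose Fourier transform is $\tfrac{1}{C}t_i\norm{t}^\beta_2\Phi_K(t) - t_i \Phi_K(t)/\Phi_\psi(t/h)$, a compactly supported smooth function; so by the same Riemann--Lebesgue argument used throughout (and invoked in the remark after the Assumptions, with $m=\infty$ available under (B1) if needed — though here (A1) with $m\ge 3$ suffices for an $L^1$ bound after enough integrations by parts) this is $O\big(\norm{K_\infty - h^\beta K_n}_1\big)$, which by (A4), specifically \eqref{Assump4}, is $o\big(1/(\sqrt n h^{-3})\big) = o(h^3/\sqrt n)$; multiplying by $n^{1/2}$ gives $o(h^3) = o(1)$. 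For the truncation term, $\tilde K_{\infty,i} - K_{\infty,i}$ has Fourier transform supported on $\norm{t}_2 > a_n$ intersected with $[-1,1]^2$, which is \emph{empty} once $a_n > \sqrt 2$; so for all $n$ large enough this term is identically zero. (If one prefers to keep $a_n \to \infty$ slowly with $h a_n \to 0$ as a genuinely active truncation, the same conclusion holds by dominated convergence since $t_i\norm{t}_2^\beta\Phi_K(t)\mathbbm{1}(\norm{t}_2 > a_n) \to 0$ pointwise and is dominated.)

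The main obstacle is making the passage from an $L^1$ bound on a Fourier transform back to an $L^1$ bound on the function itself quantitatively honest: a bare Riemann--Lebesgue statement only gives decay, not a rate, so to turn \eqref{Assump4} into the bound on $\norm{K_{\infty,i}-h^\beta K_{n,i}}_1$ one must either differentiate first and then invoke (A4) on the already-differentiated kernels (which is really how (A4) should be read, in analogy with Assumption 2 of \cite{bissantz}), or integrate by parts $m$ times to trade smoothness of $\Phi_K$ for spatial decay of order $(1+\norm{z})^{-m}$ and control the $L^1$ norm through the tail — the remark following the Assumptions already records that $h^\beta|K_n(z)| \lesssim (1+\norm z)^{-m}$ and that the same holds for derivatives, so this tail control is available. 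I would present the argument in the first form, treating (A4) as directly furnishing the required $o(h^3/\sqrt n)$ control on the relevant derivative, and relegate the $m$-fold integration by parts to a remark.
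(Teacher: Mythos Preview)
Your overall plan --- split $\tilde K_\infty-h^\beta K_n$ into a truncation part and a kernel-discrepancy part controlled by (A4) --- is the paper's strategy, but the execution has two genuine gaps.

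\textbf{Unnecessary integration by parts and a false Young inequality.} After passing to the partial derivatives and convolving with the distribution function $F_n$, you bound the $L^1$ norm by $\norm{\tilde K_{\infty,i}-h^\beta K_{n,i}}_1\,\norm{F_n}_\infty$. There is no such Young inequality: $L^1\star L^\infty$ embeds into $L^\infty$, not $L^1$ (take any $\phi\in L^1$ with $\int\phi=1$ and $g\equiv 1$; then $\phi\star g\equiv 1\notin L^1$). The analogous step in Lemma~\ref{pl2} works only because there one convolves with $F_n-F\in L^1$; here $F_n$ alone is not integrable. The paper bypasses all of this: since $dF_n$ is a probability measure, the substitution $z=(x-y)/h$ gives directly
\[
\norm{\tilde f_n-h^\beta f_n}_1 \;\le\; h^{-2}\int_{\R^2}\int_{\R^2}\Big|(\tilde K_\infty-h^\beta K_n)\Big(\frac{x-y}{h}\Big)\Big|\,dx\,dF_n(y) \;=\; \norm{\tilde K_\infty-h^\beta K_n}_1,
\]
with no derivatives involved. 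In particular (A4) applies as stated, and you do not need your ad hoc reading of (A4) as a statement about $K_{\infty,i}-h^\beta K_{n,i}$.

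\textbf{Misreading of the truncation.} You treat $\tilde K_\infty$ as a Fourier-domain cutoff of $K_\infty$ at radius $a_n$, which would indeed be vacuous once $a_n>\sqrt 2$ since $\Phi_K$ is supported on $[-1,1]^2$. But the way $\tilde K_\infty$ is actually used --- its compact \emph{spatial} support $\{\norm{z}\le a_n\}$ is precisely what drives the independence argument in Lemma~\ref{pl6}, and the present proof bounds the truncation error by $\int_{\norm{z}\ge a_n}|K_\infty(z)|\,dz$ --- shows that the intended truncation is in the spatial variable, $\tilde K_\infty=K_\infty\cdot\1_{\{\norm{\cdot}\le a_n\}}$. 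The truncation term is therefore not eventually zero; the paper handles it via the decay $|K_\infty(z)|\lesssim(1+\norm{z})^{-m}$ (Riemann--Lebesgue, using the smoothness and compact support of $\Phi_K$), which gives $\int_{\norm{z}\ge a_n}|K_\infty(z)|\,dz\lesssim a_n^{-m+2}\to 0$.
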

\begin{proof}
    We first replace $K_n$ by its asymptotic Kernel $K_\infty$. We note that
    \begin{align}
        \norm{f_n^K-h^\beta f_n}_1 \lesssim \int_{\R^2}\int_{\R^2}\Big|(K_\infty-h^\beta K_n)\Big(\frac{x-y}{h}\Big)\Big|dxdF_n(y)
    \end{align}
    The substitution $z=(x-y)/h$ yields that this is bounded by
    \begin{align}
        h^2\int_{\R^2}|K_\infty(z)-h^\beta K_n(z)|dz:=c_n
    \end{align}
    which by Assumption (A4) fulfills $c_n=o(n^{-1/2}h^{-1})$.\\

    We are left with bounding $\norm{\tilde f_n - f_n^{K_\infty}}_1$. Using similar arguments as in the first step we find the upper bound
    \begin{align}
        h \int_{\norm{z}\geq a_n}|K_\infty(z)|dz
    \end{align}
    Using that $\Phi_K$ has compact support and derivatives of order $m$ we obtain by the Riemann-Lebesgue Lemma that $|K_\infty(z)| \lesssim (1+\norm{z})^{-m}$. Consequently, using polar coordinates, we obtain
    \begin{align}
        \norm{\tilde f_n - f_n^K}_1\lesssim ha_n^{-m+2}
    \end{align}
    which finishes the proof.
\end{proof}

We now apply a poissonization argument to the density estimator based on the truncated asymptotic kernel $\tilde K_\infty$. To that end let $N$ denote a poisson random variable independent of the sequence $\{Z_i\}$ with mean $n$ and define
\begin{align}
    f_n^{po}=n^{-1}h^{-2}\sum_{i=1}^N\tilde K_\infty\Big(\frac{x-Z}{h}\Big)
\end{align}
We note that the resulting empirical measure $F_n^{po}$ has the property that $nF_n^{po}$ is a poisson process on a plane, in particular we have the following properties (compare \cite{Rosenblatt75}).
\begin{align}
\label{p4}
n^k \E \left( d(F_n^{po} - F) \right)^{2k} &= \frac{(2k)!}{k! \, 2^k} (dF)^k + \sum_{j=1}^{k-1} a^{(2k)}_{j,n} \, (dF)^j,\\
n^{(2k+1)/2} \E \left( d(F_n^{po} - F) \right)^{2k+1} &= \sum_{j=1}^{k} a^{(2k+1)}_{j,n} \, (dF)^j,
\end{align}

where

\begin{equation}
a^{(s)}_{j,n} = O \left( n^{-((s/2) - j)} \right)
\end{equation}

for each fixed \( j \), \( s \) with \( j = 1, \ldots, (s/2) - 1 \) and \( (u) \) is the smallest integer greater than or equal to \( u \).

\begin{Lemma}
\label{pl5}
    Under assumptions (A1)-(A5) we have
    \begin{align}
        \E\Big[\norm{\tilde f_n-f_n^{po}}_1\Big]\lesssim n^{-1/2}
    \end{align}
\end{Lemma}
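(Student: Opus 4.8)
The plan is to compare the density estimate $\tilde f_n = h^{-2}(\tilde K_\infty \star F_n)$ built from the deterministic-size sample with its Poissonized counterpart $f_n^{po} = n^{-1}h^{-2}\sum_{i=1}^N \tilde K_\infty((x-Z_i)/h)$, where $N \sim \mathrm{Pois}(n)$ is independent of $\{Z_i\}$. Writing $F_n$ for the empirical measure of the first $n$ points and $F_n^{po}$ for the (rescaled) Poisson empirical measure, we have
\begin{align}
\norm{\tilde f_n - f_n^{po}}_1 \le h^{-2}\int_{\R^2}\int_{\R^2}\Big|\tilde K_\infty\Big(\frac{x-y}{h}\Big)\Big|\,d\,\big|n F_n^{po} - nF_n\big|(y)\,dx = h^{-2}\,\norm{\tilde K_\infty(\cdot/h)}_1 \cdot \frac{|N-n|\;}{n}\cdot(\text{avg. mass}),
\end{align}
so the key observation is that on the event $\{N = m\}$ the two empirical measures differ by exactly $|m-n|$ point masses (either $F_n^{po}$ omits points, if $m<n$, or adds independent copies, if $m>n$). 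First I would make this coupling precise: realize $F_n^{po}$ and $F_n$ on a common probability space so that they share $\min(N,n)$ of the $Z_i$, and bound $\norm{\tilde f_n - f_n^{po}}_1$ pathwise by $\frac{1}{n}\sum_{i \in \text{sym.\ diff.}} h^{-2}\big|\tilde K_\infty((\cdot - Z_i)/h)\big|$ integrated over $x$.

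Next I would integrate out $x$. By the substitution $z = (x-Z_i)/h$ each term contributes $h^{-2}\int_{\R^2}|\tilde K_\infty((x-Z_i)/h)|\,dx = \norm{\tilde K_\infty}_1$, and since $\Phi_K$ has compact support and (under (B1)) infinitely many derivatives, $|\tilde K_\infty(z)| \lesssim (1+\norm{z})^{-m}$ for any $m$, so $\norm{\tilde K_\infty}_1 = O(1)$ uniformly in $n$ (the truncation at $\norm{\cdot}\le a_n$ only shrinks the $L^1$ norm). Hence pathwise $\norm{\tilde f_n - f_n^{po}}_1 \lesssim \frac{|N-n|}{n}$, and taking expectations gives $\E\big[\norm{\tilde f_n - f_n^{po}}_1\big] \lesssim \frac{1}{n}\E|N-n| \le \frac{1}{n}\sqrt{\mathrm{Var}(N)} = \frac{1}{n}\sqrt{n} = n^{-1/2}$, which is exactly the claimed bound.

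The main obstacle is the bookkeeping in the coupling step: one must be careful that when $N > n$ the extra points added to $F_n^{po}$ are i.i.d.\ copies of $Z_1$ (so their kernel contributions are genuinely $O(1)$ in expectation after the $x$-integration, not merely pathwise), and when $N < n$ one must check that deleting points from $F_n$ — rather than from $F_n^{po}$ — still yields a valid exchangeable coupling; a clean way is to generate a single infinite i.i.d.\ sequence $Z_1, Z_2,\ldots$ and set $F_n = \frac1n\sum_{i\le n}\delta_{Z_i}$, $F_n^{po} = \frac1n\sum_{i\le N}\delta_{Z_i}$, so that $nF_n^{po} - nF_n = \sum_{n < i \le N}\delta_{Z_i}$ or $-\sum_{N < i \le n}\delta_{Z_i}$ and the symmetric-difference index set has size exactly $|N-n|$. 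Everything else is the routine application of Young's inequality, the substitution, and $\E|N-n| \le \sqrt n$ already used implicitly elsewhere in the paper.
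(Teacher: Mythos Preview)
Your argument is correct and follows essentially the same route as the paper: couple $\tilde f_n$ and $f_n^{po}$ so that they differ by exactly $|N-n|$ kernel bumps, integrate each bump to $\norm{\tilde K_\infty}_1=O(1)$, and use $\E|N-n|\le\sqrt n$. The paper writes the same bound pointwise in $x$ and then integrates, whereas you make the coupling via an infinite i.i.d.\ sequence explicit; this is if anything cleaner. One minor slip: you invoke (B1) for the decay of $\tilde K_\infty$, but the lemma only assumes (A1)--(A5); (A1) already gives $m\ge 3$, which is enough for $\norm{K_\infty}_1<\infty$ in $\R^2$, and the spatial truncation only decreases the $L^1$ norm.
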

\begin{proof}
    Note that due to the independence of $N$ from the data we have
    \begin{align}
        \E[|\tilde f_n-f_n^{po}|]\leq n^{-1}h^{-2}\E[|N-n|]\E\Big[\Big|\tilde K_\infty\Big(\frac{x-Z}{h}\Big)\Big|\Big]
    \end{align}
    and then use that $\E[|N-n|]\leq \sqrt{n}$ and standard arguments as in the proof of Theorem 24.1 from \cite{van1996} to obtain 
    \begin{align}
        \int_{\R^2}\E\Big[\Big|\tilde K_\infty\Big(\frac{x-Z}{h}\Big)\Big|\Big]dx\lesssim h^2+o(h^2) 
    \end{align}
    due to $\tilde K_\infty$ having (uniformly in $n$) finite second moments. This finishes the proof.
\end{proof}

In preparation for deriving the asymptotic normality of the poissonized statistic
\begin{align}
    S_n=\int_{\R^2}k_y(x)(f^{po}_n(x)-\E[f^{po}_n(x)])dx
\end{align}
we define for a sequence $c_n \rightarrow 0$ with $ha_n=o(c_n)$ the quantities
\begin{align}
    \Delta_j&=4((j+1)ha_n+jc_n)\\
    \Delta_j'&=4(j+1)(ha_n+jc_n)\\
    I_{jk}&=[\Delta_j,\Delta_j']\times[\Delta_k, \Delta_k']
\end{align}
and observe the following
\begin{Lemma}
\label{pl6}
We have that the random variables
\begin{align}
     V_{jk}:= \int_{I_{jk}}k_y(x)\int_{\R^2}\tilde K_\infty\Big(\frac{x-y}{h}\Big)d(F^{po}_n-F)(y)dx, \quad j,k \in \mathbb{Z}
\end{align}
  are independent.
\end{Lemma}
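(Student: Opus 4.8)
The plan is to exhibit each $V_{jk}$ as a fixed measurable functional of the restriction of the Poisson random measure $F^{po}_n$ to a region $S_{jk}$, to verify that the regions $S_{jk}$ are pairwise disjoint, and then to invoke the fact that a Poisson process has mutually independent restrictions to disjoint Borel sets.

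First I would use that the truncated asymptotic kernel $\tilde K_\infty$ has compact spatial support: by construction it coincides with $K_\infty$ on $\{\norm{z}\leq a_n\}$ and vanishes elsewhere (this is exactly the truncation exploited in Lemma~\ref{pl4}, and it is precisely the reason for replacing $K_\infty$ by $\tilde K_\infty$, since exact independence requires exact localization). Hence, for $x\in I_{jk}$, the function $y\mapsto \tilde K_\infty((x-y)/h)$ is supported in the $ha_n$-neighbourhood of $x$, so, letting $S_{jk}$ denote the closed $ha_n$-neighbourhood of the box $I_{jk}$, we may replace $\int_{\R^2}$ by $\int_{S_{jk}}$ for every $x\in I_{jk}$:
\begin{align}
V_{jk}=\int_{I_{jk}}k_y(x)\Big(\int_{S_{jk}}\tilde K_\infty\big(\tfrac{x-y}{h}\big)\,dF^{po}_n(y)-\int_{S_{jk}}\tilde K_\infty\big(\tfrac{x-y}{h}\big)\,dF(y)\Big)\,dx.
\end{align}
The weight $k_y(x)$ and the second inner integral are deterministic, while the first inner integral depends on the point configuration only through the restriction $\1_{S_{jk}}F^{po}_n$ of $F^{po}_n$ to $S_{jk}$. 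Thus $V_{jk}=\Psi_{jk}\big(\1_{S_{jk}}F^{po}_n\big)$ for some fixed measurable map $\Psi_{jk}$.

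It then remains to show that the sets $S_{jk}$, $(j,k)\in\mathbb Z^2$, are pairwise disjoint. From the explicit formulas for $\Delta_j,\Delta_j'$ together with the assumption $ha_n=o(c_n)$, a direct computation shows that for $n$ large the $ha_n$-fattenings of the boxes $I_{jk}$ remain pairwise disjoint; the role of the scale $c_n$ is exactly to leave room for this $ha_n$-enlargement. (Since $F$ has a density we may take the $I_{jk}$ half-open, so that almost surely no point of $F^{po}_n$ falls on a shared boundary — which is in any case irrelevant for independence.) Finally, since $nF^{po}_n$ is a Poisson process on $\R^2$, its restrictions $\{\1_{S_{jk}}F^{po}_n\}_{(j,k)\in\mathbb Z^2}$ to the disjoint sets $S_{jk}$ are mutually independent, and hence so are the random variables $V_{jk}=\Psi_{jk}(\1_{S_{jk}}F^{po}_n)$.

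The only genuinely delicate point is the geometric bookkeeping in the last paragraph: one must check, from the explicit definition of the $I_{jk}$ and the relation $ha_n=o(c_n)$, that the $ha_n$-neighbourhoods of the boxes are pairwise disjoint uniformly in $(j,k)$. Everything else is an immediate consequence of the compact support of $\tilde K_\infty$ and the complete randomness of the Poisson random measure $F^{po}_n$.
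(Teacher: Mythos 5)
Your argument is correct and is essentially the paper's own proof: both localize $V_{jk}$, via the compact support of $\tilde K_\infty$ on $\{\norm{z}\le a_n\}$, to a functional of the restriction of $F^{po}_n$ to an $a_nh$-fattening of $I_{jk}$, check that the construction of the $I_{jk}$ (with gaps of order $ha_n$, using $ha_n=o(c_n)$) keeps these fattened sets pairwise disjoint, and conclude by the independence of a Poisson process on disjoint regions. No substantive difference in route or in the key lemma invoked.
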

\begin{proof}
Using the fact that $\tilde K_\infty\Big(\frac{x-y}{h}\Big)$ is supported on $[x-a_nh,x+a_nh]$ ("+" and "-" are to be understood coordinate wise) we have that $V_{jk}$ can be expressed as an integral of a measurable function over $J_{jk}=(I_{jk}+a_nh) \cup (I_{jk}-a_nh)$ with respect to $(F_n^{po}-F)$ (here "+" and "-" denote Minkowski sums/differences). By construction of $I_{jk}$ neighboring sets are separated by strips of width $4ha_n$ so that $J_{jk}$ are all pairwise disjoint. Combining this with the fact that $nF_n^{po}$ is a Poisson process on the plane yields the desired independence.
\end{proof}

Now we show that $S_n$ is asymyptotically normal.
\begin{theorem}
\label{t2}
    Under assumption (A1) to (A5) we have that
    \begin{align}
        \sqrt{n}hS_n\overset{d}{\rightarrow} \mathcal{N}(0,\sigma^2) 
    \end{align}
    where 
    \begin{align}
       \sigma^2:= \int_{\R^2}k_y(x)^2f(x)dx\int_{\R^2}K_\infty(x)^2dx~.
    \end{align}
\end{theorem}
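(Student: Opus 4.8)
The plan is to prove Theorem~\ref{t2} by a blocking argument: decompose $\sqrt{n}h\,S_n$ into the sum of the independent blocks $V_{jk}$ furnished by Lemma~\ref{pl6}, plus a remainder coming from the separating strips; show the remainder is asymptotically negligible; and apply the Lyapunov central limit theorem for triangular arrays to the block sum, computing the relevant moments via the Poissonized moment identities \eqref{p4}.

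First, using Fubini, I would rewrite
\[
\sqrt{n}h\,S_n=\frac{\sqrt n}{h}\int_{\R^2}\Big(\int_{\R^2}k_y(x)\,\tilde K_\infty\big(\tfrac{x-y}{h}\big)\,dx\Big)\,d(F^{po}_n-F)(y),
\]
and split the inner $x$-integral according to the partition of $\R^2$ into the blocks $I_{jk}$ and the complementary strips of width of order $ha_n$, obtaining $\sqrt{n}h\,S_n=\tfrac{\sqrt n}{h}\sum_{j,k\in\Z}V_{jk}+\tfrac{\sqrt n}{h}R_n$ where $R_n$ carries the strip contribution. Since $\E[F^{po}_n]=F$ each $V_{jk}$ is centred, and by Lemma~\ref{pl6} the family $\{V_{jk}\}$ is independent. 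For the remainder I would invoke \eqref{p4} with $k=1$, which gives $\mathrm{Var}(R_n)=n^{-1}\int(\int_{\mathrm{strips}}k_y(x)\tilde K_\infty((x-y)/h)dx)^2dF(y)$; bounding this with $\norm{k}_\infty<\infty$ (A5), the decay $|\tilde K_\infty(z)|\lesssim(1+\norm{z})^{-m}$, and the fact that the total Lebesgue measure of the strips is of relative order $ha_n/c_n=o(1)$ (together with the tail condition in (A3) to handle far-away strips) yields $\tfrac{n}{h^2}\mathrm{Var}(R_n)\to0$, hence $\tfrac{\sqrt n}{h}R_n=o_\p(1)$ by Chebyshev.

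It remains to apply the Lyapunov CLT to $W_{jk}:=\tfrac{\sqrt n}{h}V_{jk}$; writing $\Psi_{jk}(y):=\int_{I_{jk}}k_y(x)\tilde K_\infty((x-y)/h)\,dx$ so that $V_{jk}=\int\Psi_{jk}\,d(F^{po}_n-F)$, the $k=1$ case of \eqref{p4} gives $\mathrm{Var}(W_{jk})=\tfrac{n}{h^2}\mathrm{Var}(V_{jk})=\tfrac{1}{h^2}\int\Psi_{jk}^2\,dF$. After the substitution $z=(x-y)/h$, a Taylor expansion of $k_y$ on the vanishing scale $ha_n$, summation over the blocks as $c_n\to0$ with $ha_n=o(c_n)$, and using $\tilde K_\infty\to K_\infty$ in $L^2$ (the truncation being controlled by $|K_\infty(z)|\lesssim(1+\norm{z})^{-m}$ and $a_n\to\infty$), a computation gives $\sum_{j,k}\mathrm{Var}(W_{jk})\to\int_{\R^2}k_y(x)^2f(x)\,dx\cdot\int_{\R^2}K_\infty(x)^2\,dx=\sigma^2$. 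For the Lyapunov condition it suffices to show $\sum_{j,k}\E[W_{jk}^4]\to0$; the fourth-moment identity in \eqref{p4} (case $k=2$) yields $\E[V_{jk}^4]=3(\mathrm{Var}(V_{jk}))^2+n^{-3}\int\Psi_{jk}^4\,dF$, hence $\E[W_{jk}^4]=3(\mathrm{Var}(W_{jk}))^2+n^{-1}h^{-4}\int\Psi_{jk}^4\,dF$. Since each block carries vanishing $F$-mass, $\max_{j,k}\mathrm{Var}(W_{jk})\to0$, so $\sum_{j,k}(\mathrm{Var}(W_{jk}))^2\le\big(\max_{j,k}\mathrm{Var}(W_{jk})\big)\sum_{j,k}\mathrm{Var}(W_{jk})\to0$, while the $n^{-1}h^{-4}$-term is treated analogously using $|\tilde K_\infty(z)|\lesssim(1+\norm{z})^{-m}$ and the order estimates $a^{(s)}_{j,n}=O(n^{-(s/2-j)})$. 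Combining the Lyapunov CLT for $\sum_{j,k}W_{jk}$ with $\tfrac{\sqrt n}{h}R_n=o_\p(1)$ and Slutsky gives the claim.

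I expect the main obstacle to be precisely this last, fourth-moment/Lyapunov step: because the deconvolution kernel $\tilde K_\infty$ is unbounded over the whole plane and enjoys only the polynomial decay $|K_\infty(z)|\lesssim(1+\norm{z})^{-m}$ rather than compact support, the crude block estimates available for bounded compactly supported kernels do not apply, and one must exploit the precise Poisson moment expansion \eqref{p4} together with $m\ge3$ to keep the sum of the fourth moments over the growing collection of blocks negligible — this is the "more delicate approach to certain bounds" mentioned in the introduction. A secondary difficulty is the bookkeeping for blocks $I_{jk}$ far from the origin, where the polynomial tails of $K_\infty$ have to be traded against the integrability of $F$ guaranteed by (A3), and ensuring that the constraints on $a_n$ and $c_n$ ($ha_n\to0$, $ha_n=o(c_n)$, $c_n\to0$, plus those already needed in Lemma~\ref{pl4}) can be met simultaneously.
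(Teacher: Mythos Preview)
Your proposal follows the same overall strategy as the paper: decompose $S_n$ into the independent blocks $V_{jk}$ furnished by Lemma~\ref{pl6}, use the Poisson moment identities \eqref{p4} to control second and fourth moments, and conclude via the Lyapunov CLT. The one organisational difference is that the paper does \emph{not} attempt to apply the CLT directly to the full (countably infinite) collection of blocks. Instead it introduces an additional spatial truncation parameter $R$, sets $T_{nR}=\sqrt{n}h\sum_{|j|,|k|\lesssim R/c_n}V_{jk}$, proves $T_{nR}\overset{d}{\to}\mathcal N(0,\sigma_R^2)$ with $\sigma_R^2=\int_{\norm{x}\le R}k_y(x)^2f(x)\,dx\cdot\int K_\infty^2$ for each fixed $R$ (this is a finite triangular array, so Lyapunov applies directly, with the crude bound $\E[V_{jk}^4]\lesssim c_n^4 h^{-4}n^{-2}$ sufficing), and then lets $R\to\infty$ via Theorem~3.2 of Billingsley. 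This double-limit device cleanly sidesteps the informality in your step where you invoke a Lyapunov CLT for rows with infinitely many summands, and it simultaneously absorbs the far-away blocks into the $R\to\infty$ passage rather than trading polynomial tails of $K_\infty$ against the integrability in (A3). Conversely, your explicit treatment of the strip remainder $R_n$ is a detail the paper's write-up does not spell out. In short: same proof, slightly different bookkeeping, with the paper's $R$-truncation plus Billingsley~3.2 being the standard way to formalise precisely the part you flagged as the main obstacle.
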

\begin{proof}
    We define the following random variables
    \begin{align}
        T_n&=\sqrt{n}h\sum_{j,k}V_{jk}\\
        T_{nR}&=\sqrt{n}h\sum_{j,k \leq Rc_n}V_{jk}
    \end{align}
    where $R$ is some arbitrary positive real number. We begin with noting that $\E[S_n]=\E[T_n]=\E[T_{nr}]=0$ and that (using \eqref{p4})
    \begin{align}
    \label{p3}
        \text{Var}(V_{jk})=\frac{1}{nh^4}\int_{I_{jk}}k^2_y(x)\int_{\R^2}\tilde K_\infty\Big(\frac{x-u}{h}\Big)^2dF(u)dx
    \end{align}
    Standard arguments then show that
    \begin{align} 
    \label{p6}
       \text{Var}(T_{nR}) \rightarrow \int_{\norm{x}\leq R}k_y(x)^2f(x)dx\int_{\R^2}K(x)^2dx:=\sigma^2_R
    \end{align}
    Similarly one obtains that
    \begin{align}
    \label{p5}
        \E[V_{jk}^4]\lesssim \frac{c_n^4h^4}{n^2h^8}
    \end{align}
    which, by the Lyapunov CLT (applicable because of \eqref{p3} and \eqref{p5}), yields that
    \begin{align}
    \label{p1}
        T_{nR}\overset{d}{\rightarrow}  \mathcal{N}(0,\sigma^2_R)~.
    \end{align}
    We also observe the following additional facts which follow from the dominated convergence theorem:
    \begin{align}
    \label{p2}
        \lim_n& \lim_R \  \p(|T_{n,r}-T_n|>\epsilon)=0\\
        &N(0,\sigma^2_R)\overset{d}{\rightarrow} N(0,\sigma^2)        
    \end{align}
    Combining \eqref{p1} and \eqref{p2} we obtain by Theorem 3.2 from \cite{billingsley1968} the desired conclusion.
\end{proof}

\subsection{Bootstrap}

We need the bootstrap versions of the auxiliary variables we defined in the proof of the asymptotic result. We will list them below for the readers convenience. Let $F_n^*$ denote the empirical distribution function of the (inaccessible) bootstrap sample $(X_1^*,Y_1^*),...,(X_n^*,Y_n^*)$ induced by $Z_1^*,...,Z_n^*$. 
\begin{align}
    f^*_n(x)&=\frac{1}{nh^2}\sum_{k=1}^nK_n((x-(X^*_k,Y^*_k))/h)\\
    k^*_{y,n}(x)&=\int_{\R^2}k(x,y)f_n^*(y)dy\\
    \tilde f^*&=h^{-2}(\tilde K_\infty\star F_n^*)\\
    f_n^{K,*}&=h^{-2}(K_\infty \star F_n^*)\\
    f_n^{po,*}&=h^{-1}(\tilde K_\infty \star F_n^*)\\
    F_n^{po,*}&=\frac{N}{n}F_N^*\\    
    V_{jk}^*&=\int_{I_{jk}}k_{y,n}(x)\int_{\R^2}\tilde K_\infty\Big(\frac{x-y}{h}\Big)d(F_n^{po,*}-F_n)(y)dx\\
    S_n^*&=\int_{\R^2}k_{y,n}(x)(f^{po,*}(x)-\E^*[f^{po,*}(x)])dx
\end{align}
We also define for any measurable set $A$ a shorthand for the conditional probability and expectation given $Z_1,...,Z_n$ as follows
\begin{align}
    \P^*(A)&=\P(A | Z_1,...,Z_n)\\
    \E^*[A]&=\E[A | Z_1,...,Z_n]~.
\end{align}
\textbf{Proof of Theorem \ref{t3}}\\
\begin{proof}
Follows by the same arguments as Theorem \ref{t1}, using Lemmas \ref{bl2} to \ref{bl6} and Theorem \ref{bt2} in place of Lemmas \ref{pl2} to \ref{pl6} and Theorem \ref{t2}.
\end{proof}

\begin{Lemma}
\label{bl2}
    Suppose that both assumptions (A1)-(A5) and (B1)-(B2) hold. Then there exists a set $\mathcal{A}$ with $\p(\mathcal{A})=1-o(1)$ on which
    \begin{align}
    \label{b1}
        \norm{f_n^*(x)-f_n(x)}_1 \lesssim h^{-\beta-1}n^{2/\kappa-1/2}\log(n)
    \end{align}
    holds with $\P^*$ probability $1-o(1)$.
\end{Lemma}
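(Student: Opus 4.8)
The plan is to bound the $L^1$ distance $\norm{f_n^* - f_n}_1$ by passing through the empirical distribution functions, exactly as in the proof of Lemma \ref{pl2}, but now conditionally on the data. Write $f_n^*(x) - f_n(x) = h^{-2}(K_n \star (F_n^* - F_n))(x)$, where $F_n^*$ is the (inaccessible) empirical distribution of the latent bootstrap sample and $F_n$ the empirical distribution of the latent original sample. Integrating by parts in each coordinate, as in Lemma \ref{pl2}, gives
\begin{align}
    \norm{f_n^* - f_n}_1 \lesssim \sum_{i=1}^2 h \norm{K_{ni}}_1 \norm{F_n^* - F_n}_1~,
\end{align}
where $K_{ni}$ is the $i$-th partial derivative of $K_n$. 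The Riemann–Lebesgue bound from the Assumptions gives $h^\beta \norm{K_{ni}}_1 \lesssim 1$, i.e. $\norm{K_{ni}}_1 \lesssim h^{-\beta}$, so everything reduces to showing that $\norm{F_n^* - F_n}_1 \lesssim n^{2/\kappa - 1/2}\log(n)$ with the stated probabilities.

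First I would control $\norm{F_n - F}_1$ on a good event. Under (B2), $X_i, Y_i$ have finite moments of order $\kappa$, so a truncation/Markov argument shows that with probability $1-o(1)$ all observations lie in a box of side length $O(n^{1/\kappa})$; off a null set we may intersect this with the event from (A3) that $\int \sqrt{F(1-F)} < \infty$. On this box, by the DKW inequality $\norm{F_n - F}_\infty \lesssim n^{-1/2}\sqrt{\log n}$ with probability $1 - o(1)$, and multiplying by the area $O(n^{2/\kappa})$ of the effective support gives $\norm{F_n - F}_1 \lesssim n^{2/\kappa - 1/2}\log(n)$ (absorbing the $\sqrt{\log n}$ into $\log n$). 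Define $\mathcal{A}$ to be the intersection of these events; then $\p(\mathcal{A}) = 1 - o(1)$.

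Next, working conditionally on $\mathcal{A}$, I would bound $\norm{F_n^* - F_n}_1$ with $\P^*$-probability $1 - o(1)$. Since $F_n^*$ is an empirical distribution of $n$ i.i.d. draws from $F_n$ (after adding the independent errors), a conditional DKW inequality gives $\norm{F_n^* - F_n}_\infty \lesssim n^{-1/2}\sqrt{\log n}$ with $\P^*$-probability $1 - o(1)$; and on $\mathcal{A}$ the bootstrap sample is supported (coordinatewise, after convolving with the known-density error, whose tails are light by (A2)) in a box of side $O(n^{1/\kappa})$, again up to $\P^*$-probability $1 - o(1)$ by a Markov bound applied to the empirical $\kappa$-th moment, which is finite on $\mathcal{A}$. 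Multiplying the sup-norm bound by the area $O(n^{2/\kappa})$ yields $\norm{F_n^* - F_n}_1 \lesssim n^{2/\kappa - 1/2}\log(n)$. Combining with the convolution estimate above gives \eqref{b1}.

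The main obstacle is the second step: one cannot simply invoke the unconditional tail bound from (B2) for the bootstrap sample, because resampling can place mass on the largest observed values, so the "effective support" of $F_n^*$ must be controlled in terms of the empirical $\kappa$-th moment of the original sample rather than the population moment. Making this rigorous requires showing that on $\mathcal{A}$ the empirical moment $n^{-1}\sum (\norm{(X_i,Y_i)})^\kappa$ is $O(1)$ (a law-of-large-numbers / Markov argument, which is why (B2) asks for moments of order $\kappa$ and not merely that a box of size $n^{1/\kappa}$ captures all points), and then applying a conditional Markov inequality to the resampled empirical measure. The interplay of the two probability scales — the unconditional event $\mathcal{A}$ and the conditional $\P^*$-event — together with the bookkeeping of logarithmic factors is the delicate part; the convolution/integration-by-parts reduction and the DKW applications are routine.
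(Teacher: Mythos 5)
Your proof is correct in substance and follows the same skeleton as the paper's: the reduction $\norm{f_n^*-f_n}_1\lesssim h^{-\beta-1}\norm{F_n^*-F_n}_1$ via integration by parts and Young's inequality, and a good event $\mathcal{A}$ on which the $\kappa$-th moment bound from (B2) confines the latent sample to a box of polynomially growing side length. Where you diverge is the final concentration step: the paper forms the \emph{empirical analogue of assumption (A3)}, showing $\int_{\R^2}\sqrt{F_n(t)(1-F_n(t))}\,dt\lesssim n^{2/\kappa}\log n$ on $\mathcal{A}$ and then applying the conditional Markov inequality to $\E^*\norm{F_n^*-F_n}_1\leq n^{-1/2}\int\sqrt{F_n(1-F_n)}$, whereas you use a conditional multivariate DKW bound for $\norm{F_n^*-F_n}_\infty$ and multiply by the volume of the box. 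Both routes deliver the rate $n^{2/\kappa-1/2}\log n$, and both tacitly treat $F_n^*-F_n$ as supported on that box, so yours is a legitimate variant rather than a gap. Two places where you work harder than necessary: first, what you single out as "the main obstacle" --- controlling the effective support of the bootstrap sample via the empirical $\kappa$-th moment --- is moot, because the latent bootstrap sample $(X_i^*,Y_i^*)$ induced by resampling the $Z_i$ is a sub-multiset of the latent original sample, so $\max_i\norm{(X_i^*,Y_i^*)}_\infty\leq\max_i\norm{(X_i,Y_i)}_\infty$ holds deterministically and no additional conditional tail bound (and no convolution with the error density) is needed; the event $\mathcal{A}$ already does all the work. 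Second, your preliminary control of $\norm{F_n-F}_1$ plays no role in this lemma and can be dropped.
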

\begin{proof}    
    Following the arguments in the proof of Lemma \ref{pl2} verbatim we arrive at the inequality
    \begin{align}
        \norm{f_n^*(x)-f_n(x)}_1 \lesssim h^{-\beta-1}\norm{F_n^*-F_n}_1
    \end{align}
    Using that we have $\E[|X_i|^\kappa]$ and $\E[|Y_i|^\kappa]$ are both finite we obtain by the union bound that
    \begin{align}
        \max_{1 \leq i \leq n}\norm{(X_i,Y_i)}_\infty  =O_\p(n^{2/\kappa})
    \end{align}
    In particular we have that there exists a set  $\mathcal{A}$ with $\P(\mathcal{A})=1-o(1)$ on which
    \begin{align}
        \max_{1 \leq i \leq n}\norm{(X_i,Y_i)}_\infty  \lesssim n^{2/\kappa}\sqrt{\log(n)}
    \end{align}
    holds. This yields that
    \begin{align}
        \int_{\R^2}\sqrt{F_n(t)(1-F_n(t)}dt \lesssim n^{2/\kappa}\log(n)
    \end{align}
    holds on $\mathcal{A}$. In particular we then use conditional versions of the Markov inequality and the union bound to obtain that
    \begin{align}
        \norm{f_n^*(x)-f_n(x)}_1 \lesssim h^{-\beta-1}n^{2/\kappa-1/2}\log(n)
    \end{align}
    holds with $\P^*$-probability $1-o(1)$ on the set $\mathcal{A}$. We are done. 
\end{proof}

\begin{Lemma}
    \label{bl3}
    Suppose that both assumptions (A1)-(A5) and (B1)-(B2) hold. Then there exists a set $\mathcal{A}$ with $\p(\mathcal{A})=1-o(1)$ on which
    \begin{align}
        sup_{y \in \R^2}| k^*_{x,n}(y)-k_{x,n}(y)|&\lesssim h^{-\beta-1}n^{2/\kappa-1/2}\log(n)\\
        \Big|\int_{\R^2}( k^*_{x,n}(y)-k_{x,n}(y))(f_n^*(y)-f_n(y))dy\Big| & \lesssim h^{-2\beta-2}n^{4/\kappa-1}\log(n)
    \end{align}
    holds with $\p^*$ probability $1-o(1)$.
\end{Lemma}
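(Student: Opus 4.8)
\textbf{Proof strategy for Lemma \ref{bl3}.}
The plan is to transcribe the proof of Lemma \ref{pl3} essentially line by line, with Lemma \ref{bl2} taking the place of Lemma \ref{pl2}. Fix once and for all the event $\mathcal{A}$ produced by Lemma \ref{bl2}, on which $\max_{1\le i\le n}\norm{(X_i,Y_i)}_\infty\lesssim n^{2/\kappa}\sqrt{\log n}$ and hence $\p(\mathcal{A})=1-o(1)$. The same $\mathcal{A}$ will serve for both assertions, and on it each estimate below will be seen to hold with $\P^*$-probability $1-o(1)$.

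For the first inequality, observe that by symmetry of $k$ one has, pointwise in $y$,
\begin{align}
 k^*_{x,n}(y)-k_{x,n}(y)=\int_{\R^2}k(u,y)\bigl(f_n^*(u)-f_n(u)\bigr)\,du ,
\end{align}
so that assumption (A5) gives $\bigl|k^*_{x,n}(y)-k_{x,n}(y)\bigr|\le\norm{k}_\infty\norm{f_n^*-f_n}_1$. Since the right-hand side does not depend on $y$, the supremum over $y$ costs nothing, and inserting the bound $\norm{f_n^*-f_n}_1\lesssim h^{-\beta-1}n^{2/\kappa-1/2}\log n$ from Lemma \ref{bl2} (valid on $\mathcal{A}$ with $\P^*$-probability $1-o(1)$) gives the first claim. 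Note that this step is in fact cleaner than its population analogue in Lemma \ref{pl3}: comparing $f_n^*$ with $f_n$ involves no smoothing-bias term, so there is nothing to discard.

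For the second inequality, apply Hölder's inequality exactly as in Lemma \ref{pl3}:
\begin{align}
 \Bigl|\int_{\R^2}\bigl(k^*_{x,n}(y)-k_{x,n}(y)\bigr)\bigl(f_n^*(y)-f_n(y)\bigr)\,dy\Bigr|
 \le\sup_{y\in\R^2}\bigl|k^*_{x,n}(y)-k_{x,n}(y)\bigr|\cdot\norm{f_n^*-f_n}_1 ,
\end{align}
then bound the first factor by the display just established and the second once more by Lemma \ref{bl2}. This yields the product of the two rates, namely $h^{-2\beta-2}n^{4/\kappa-1}$ up to logarithmic factors; the exact power of $\log n$ is immaterial for the later use of this estimate, and if desired the surplus log can be absorbed into the $n^{\delta}$-slack of assumption (B2).

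I do not foresee a genuine obstacle here: Lemma \ref{bl3} is a bootstrap transcription of Lemma \ref{pl3}, and the substantive work --- the empirical surrogate of assumption (A3), i.e.\ the control of $\norm{F_n^*-F_n}_1$ through a maximal inequality for $\max_i\norm{(X_i,Y_i)}_\infty$ --- has already been carried out inside Lemma \ref{bl2}. The only points that need care are the conditioning bookkeeping (first discard the complement $\mathcal{A}^c$, of probability $o(1)$, attached to the original sample, then argue conditionally under $\P^*$) and a careful tally of the powers of $n$ and $\log n$, so that the two estimates come out at the advertised rates.
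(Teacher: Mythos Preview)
Your proposal is correct and follows exactly the approach the paper takes: the paper's proof of Lemma \ref{bl3} simply says that the argument of Lemma \ref{pl3} can be carried over verbatim, with Lemma \ref{bl2} replacing Lemma \ref{pl2}. Your remark about the extra factor of $\log n$ in the second bound is a fair observation that the paper glosses over, and your handling of it (absorbing it into the $n^\delta$-slack of (B2)) is appropriate.
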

\begin{proof}
     Can be carried over verbatim from the proof of Lemma \ref{pl3}, using Lemma \ref{bl2} in place of \ref{pl2}.
\end{proof}

\begin{Lemma}
    \label{bl5}
    We have that
    \begin{align}
        n^{1/2}h\norm{\tilde f^*-h^\beta f_n^*}_1 =o(1)
    \end{align}
    holds a.s. conditional on $Z_1,...,Z_n$.
\end{Lemma}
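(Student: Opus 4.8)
The plan is to run the proof of Lemma~\ref{pl4} essentially verbatim, with the bootstrap empirical measure $F_n^*$ in place of $F_n$ throughout. The key observation is that nothing in the proof of Lemma~\ref{pl4} uses any property of $F_n$ beyond it being a probability measure: both $\tilde f^*$ and $h^\beta f_n^*$ are obtained by smoothing the \emph{same} measure $F_n^*$, so their difference is $h^{-2}$ times the convolution of $F_n^*$ with a fixed scaled kernel, and upon taking $L^1$-norms and applying Fubini one pulls out $\norm{F_n^*}_{\mathrm{TV}}=1$, leaving a deterministic quantity that depends only on the kernels $K_n,K_\infty,\tilde K_\infty$, on $h$, $n$, and the auxiliary truncation level $a_n$ (which is chosen with $ha_n\to0$).

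Concretely, I would first split, using the intermediate object $f_n^{K,*}=h^{-2}(K_\infty\star F_n^*)$ already introduced above,
\begin{align}
    \norm{\tilde f^*-h^\beta f_n^*}_1\le \norm{\tilde f^*-f_n^{K,*}}_1+\norm{f_n^{K,*}-h^\beta f_n^*}_1.
\end{align}
For the second term, the substitution $z=(x-y)/h$ inside the double integral together with $\norm{F_n^*}_{\mathrm{TV}}=1$ produces exactly the bound $c_n$ of Lemma~\ref{pl4}, and $c_n=o(n^{-1/2}h^{-1})$ by Assumption~(A4) (a statement about kernels, hence unaffected by resampling); thus $\sqrt n\,h$ times this term is $o(1)$. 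For the first term, the same manipulation plus the fact that $\tilde K_\infty$ is the spatial truncation of $K_\infty$ to $\{\norm{z}\le a_n\}$ gives a bound $\lesssim\int_{\norm z\ge a_n}|K_\infty(z)|\,dz$; by the Riemann--Lebesgue lemma and the compact support and smoothness of $\Phi_K$ — under (B1) we have $m=\infty$, so $|K_\infty(z)|\lesssim(1+\norm z)^{-m}$ for every $m$ — polar coordinates give a bound $\lesssim a_n^{-m+2}$, and a suitable choice of $a_n$ (with $ha_n\to0$) and $m$ makes $\sqrt n\,h\,a_n^{-m+2}\to0$.

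Both bounds are deterministic: they depend on $h,n,a_n$ and the fixed kernels, not on the bootstrap draw, nor even on the realization of $Z_1,\dots,Z_n$. Consequently $\sqrt n\,h\norm{\tilde f^*-h^\beta f_n^*}_1=o(1)$ holds for every bootstrap realization, in particular a.s.\ conditional on $Z_1,\dots,Z_n$, which is the claim. There is no genuine obstacle here; the only point worth flagging is \emph{why} the $n^{2/\kappa}$-type losses that appeared in Lemmas~\ref{bl2}--\ref{bl3} do not appear, namely that this estimate is insensitive to the (random) support and tail behaviour of $F_n^*$ and sees it only through its total mass $1$.
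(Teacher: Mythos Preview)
Your argument is correct and is precisely the verbatim transfer of Lemma~\ref{pl4} the paper intends (the paper's pointer to Lemma~\ref{pl5} is evidently a typo; the statement of Lemma~\ref{bl5} is the bootstrap analogue of Lemma~\ref{pl4}, not of Lemma~\ref{pl5}). Your explicit remark that the bounds depend on $F_n^*$ only through its total mass, hence are deterministic and hold a.s.\ conditionally, is the right explanation for why no $n^{2/\kappa}$-losses enter here.
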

\begin{proof}
    Can be carried over verbatim from the proof of Lemma \ref{pl5}.
\end{proof}

\begin{Lemma}
    \label{bl6}
    Suppose that both assumptions (A1)-(A5) and (B1)-(B2) hold. We have that the random variables
    \begin{align}
     V_{jk}^*:= \int_{I_{jk}}k_y(x)\int_{\R^2}\tilde K_\infty\Big(\frac{x-y}{h}\Big)d(F^{po,*}_n-F_n)(y)dx, \quad j,k \in \mathbb{Z}
\end{align}
are independent, conditionally on $Z_1,...,Z_n$.
\end{Lemma}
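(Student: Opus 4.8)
The plan is to run the proof of Lemma~\ref{pl6} essentially verbatim, the only genuinely new ingredient being that the poissonized bootstrap empirical measure inherits a Poisson structure once we pass to the conditional world given $Z_1,\dots,Z_n$.

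\textbf{Step 1: the conditional Poisson structure.} I would first record that, conditionally on $Z_1,\dots,Z_n$, the measure $n F_n^{po,*}=\sum_{i=1}^{N}\delta_{Z_i^*}$ is a Poisson random measure on $\R^2$ with (conditional) intensity $\sum_{i=1}^{n}\delta_{Z_i}$. This is the marking/superposition theorem for Poisson processes: $N\sim\mathrm{Poisson}(n)$ is independent of the i.i.d.\ draws $Z_1^*,Z_2^*,\dots$ from the empirical law $n^{-1}\sum_i\delta_{Z_i}$, so the superposition of the first $N$ of them is Poisson with intensity $n\cdot n^{-1}\sum_i\delta_{Z_i}=\sum_i\delta_{Z_i}$. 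Consequently its restrictions to disjoint Borel sets are independent, and since $F_n$ is $\sigma(Z_1,\dots,Z_n)$-measurable the same independence holds for the restrictions of the signed measure $F_n^{po,*}-F_n$.

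\textbf{Step 2: localization and disjointness.} Because $\tilde K_\infty(\cdot/h)$ is supported on $[-a_nh,a_nh]^2$, for every $x\in I_{jk}$ the inner integral $\int_{\R^2}\tilde K_\infty((x-y)/h)\,d(F_n^{po,*}-F_n)(y)$ depends on $F_n^{po,*}-F_n$ only through its restriction to the enlarged block $J_{jk}:=I_{jk}+[-a_nh,a_nh]^2$; integrating against the weight $x\mapsto k_y(x)$ (which, like $k_{y,n}(x)$, is deterministic given $Z_1,\dots,Z_n$) over $x\in I_{jk}$ then exhibits $V_{jk}^*$ as a measurable function of $(F_n^{po,*}-F_n)$ restricted to $J_{jk}$. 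The construction of the $I_{jk}$ through $\Delta_j,\Delta_j'$ with $ha_n=o(c_n)$ separates neighbouring blocks by strips of width $4a_nh$, exactly as in Lemma~\ref{pl6}, so the $J_{jk}$ are still pairwise disjoint. Combining this with the independent-increments property from Step~1 gives conditional independence of the $V_{jk}^*$.

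The argument is thus a direct transcription of Lemma~\ref{pl6}, and I do not anticipate any substantial obstacle: the only point needing a moment's care is Step~1, namely invoking the marking theorem to certify that $n F_n^{po,*}$ really is a Poisson random measure in the conditional probability space, after which the purely geometric disjointness of the $J_{jk}$ is inherited unchanged from the grid construction.
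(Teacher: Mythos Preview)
Your proposal is correct and follows exactly the approach the paper takes: the paper's own proof of this lemma says only that it ``can be carried over verbatim from the proof of Lemma~\ref{pl6}'', and you have done precisely that, while additionally spelling out the one point that needs checking in the bootstrap world, namely that $nF_n^{po,*}$ is a Poisson random measure conditionally on $Z_1,\dots,Z_n$.
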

\begin{proof}
    Can be carried over verbatim from the proof of Lemma \ref{pl6}.
\end{proof}

\begin{theorem}
    \label{bt2}
    Suppose that both assumptions (A1)-(A5) and (B1)-(B2) hold. We then have that
    \begin{align}
        \sqrt{n}hS_n^*\overset{d}{\rightarrow} \mathcal{N}(0,\sigma^2) 
    \end{align}
    conditionally on $Z_1,...,Z_n$ in probability.
\end{theorem}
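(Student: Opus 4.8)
The plan is to reproduce the proof of Theorem \ref{t2} verbatim, but now for the conditional law given $Z_1,\dots,Z_n$, tracking every limit in probability over the data. I introduce the bootstrap block sums
\begin{align}
T_n^* = \sqrt{n}h\sum_{j,k}V_{jk}^*, \qquad T_{nR}^* = \sqrt{n}h\sum_{j,k\leq Rc_n}V_{jk}^*,
\end{align}
so that $T_n^*=\sqrt{n}hS_n^*$, exactly as $T_n=\sqrt{n}hS_n$ in the proof of Theorem \ref{t2}, after the same normalization bookkeeping. By Lemma \ref{bl6} the $V_{jk}^*$ are conditionally independent, and since $F_n^{po,*}$ has conditional mean $F_n$ we have $\E^*[V_{jk}^*]=0$, hence $\E^*[T_{nR}^*]=\E^*[T_n^*]=0$.

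The structural fact driving the conditional moments is that $nF_n^{po,*}$ is, conditionally on the data, a Poisson process with intensity $nF_n$: each $Z_i$ is repeated a $\mathrm{Poisson}(1)$ number of times, independently. Hence the identities \eqref{p4} hold verbatim with $dF$ replaced by $dF_n$, giving the conditional analogue of \eqref{p3},
\begin{align}
\mathrm{Var}^*(V_{jk}^*) = \frac{1}{nh^4}\int_{I_{jk}}k_{y,n}(x)^2\int_{\R^2}\tilde K_\infty\Big(\frac{x-u}{h}\Big)^2 dF_n(u)\,dx,
\end{align}
and after summation and rescaling,
\begin{align}
\mathrm{Var}^*(T_{nR}^*) = \frac{1}{h^2}\int_{\norm{x}\leq R}k_{y,n}(x)^2\int_{\R^2}\tilde K_\infty\Big(\frac{x-u}{h}\Big)^2 dF_n(u)\,dx.
\end{align}
I then invoke three convergences, each in probability: $k_{y,n}\to k_y$ uniformly (Lemma \ref{pl3}, whose bound is $o(1)$ by (A4)); $F_n\to F$ by the law of large numbers; and the smoothing approximation $h^{-2}\int\tilde K_\infty((x-u)/h)^2\,dF_n(u)\to f(x)\int_{\R^2}K_\infty^2$ on compacts. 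Together these yield $\mathrm{Var}^*(T_{nR}^*)\to\sigma_R^2$ in probability, with $\sigma_R^2$ as in \eqref{p6}. The conditional fourth-moment bound mirroring \eqref{p5}, $\E^*[(V_{jk}^*)^4]\lesssim c_n^4h^4/(n^2h^8)$, holds on a data event of probability $1-o(1)$ by the same computation with $dF_n$ in place of $dF$.

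The delicate point is upgrading the Lyapunov CLT, which is a statement for a fixed triangular array, to the present array that depends on $Z_1,\dots,Z_n$. I resolve this by a subsequence device: along any subsequence there is a further subsequence on which the variance convergence and the Lyapunov ratio hold almost surely, and on that subsequence the ordinary Lyapunov CLT applies conditionally and forces $\E^*[\exp(itT_{nR}^*)]\to\exp(-\sigma_R^2t^2/2)$ almost surely. As this holds along every sub-subsequence, the conditional characteristic functions converge in probability for each $t$, which is equivalent to $T_{nR}^*\to\mathcal{N}(0,\sigma_R^2)$ conditionally in probability.

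It remains to pass from $T_{nR}^*$ to $T_n^*=\sqrt{n}hS_n^*$ and let $R\to\infty$. By the conditional Chebyshev inequality, $\p^*(|T_n^*-T_{nR}^*|>\epsilon)\leq\epsilon^{-2}\mathrm{Var}^*(T_n^*-T_{nR}^*)$, and the right-hand side is the variance contribution of $\{\norm{x}>R\}$, which converges in probability to $\epsilon^{-2}\int_{\norm{x}>R}k_y(x)^2f(x)dx\int_{\R^2}K_\infty^2$ and vanishes as $R\to\infty$ because $\int k_y^2f<\infty$ (by (A5) and (A3)). Since also $\mathcal{N}(0,\sigma_R^2)\to\mathcal{N}(0,\sigma^2)$, the conditional form of Theorem 3.2 of \cite{billingsley1968}, combined with \eqref{p2}, delivers $\sqrt{n}hS_n^*\to\mathcal{N}(0,\sigma^2)$ conditionally in probability. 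The main obstacle is precisely the joint in-probability upgrade in the two middle paragraphs: one must show the data-dependent conditional variance converges to the deterministic $\sigma_R^2$ while simultaneously replacing $k_{y,n}$ by $k_y$ and $F_n$ by $F$, and then convert the fixed-array Lyapunov CLT into a conditional CLT in probability via the subsequence argument; the moment bookkeeping itself is routine given the Poisson structure of $F_n^{po,*}$ and Lemmas \ref{bl2}–\ref{bl6}.
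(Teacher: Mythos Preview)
Your proposal is correct and follows the same strategy as the paper: conditional independence of the $V_{jk}^*$ via Lemma~\ref{bl6}, convergence of the conditional variance to $\sigma_R^2$, a Lyapunov CLT applied after upgrading the data-dependent bounds to almost-sure ones along sub-subsequences (and then invoking metrizability of weak convergence), and the double-limit device of Theorem~3.2 in \cite{billingsley1968} to let $R\to\infty$. The one place where your argument is slightly looser than the paper's is the fourth-moment bound: replacing $dF$ by $dF_n$ in the computation behind \eqref{p5} does not automatically reproduce the same rate because $F_n$ is atomic, and the paper instead uses the cruder estimate $\int\tilde K_\infty((x-u)/h)^2\,dF_n(u)\lesssim F_n\big([x_1-a_nh,x_1+a_nh]\times[x_2-a_nh,x_2+a_nh]\big)\lesssim a_n^2h^2$ on a high-probability event to obtain $\E^*[(V_{jk}^*)^4]\lesssim a_n^4c_n^4/(n^2h^4)$, and then chooses $a_n=n^\delta$ with $\delta$ small enough that the Lyapunov ratio still vanishes.
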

\begin{proof}
    We follow the strategy of the proof of Theorem \ref{t2}. First we need to establish the analogues to equations \eqref{p6} and \eqref{p5}. To that end note that, due to $\tilde K_\infty$ being bounded on $\R^2$ and being 0 outside of $[-a_n,a_n]^2$, we have on a set $\mathcal{A}$ with $\p(\mathcal{A})=1-o(1)$ that
    \begin{align}
    \label{b2}
        \int_{\R^2}\tilde K_\infty\Big(\frac{x-u}{h}\Big)^2DF_n(u)&\lesssim F_n([x_1-a_nh,x_1+a_nh]\times [x_2-a_nh,x_2+a_nh]) \\
        &\lesssim a_n^2h^2+ \Big|(F_n-F)([x_1-a_nh,x_1+a_nh]\times [x_2-a_nh,x_2+a_nh])\Big|\\
        &\lesssim a_n^2h^2
    \end{align}
    holds. We hence obtain that on $\mathcal{A}$ it holds that
    \begin{align}
    \label{b3}
        \E[(V_jk^*)^4]\lesssim \frac{a_n^4h^4c_n^4}{n^2h^8}~.
    \end{align}
    By the same arguments as in the proof of Lemma \ref{bl2} we also obtain that
    \begin{align}
        \Big|\text{Var}(T_{nR})-\text{Var}^*(T_{nR}^*)\Big| \leq \frac{\log(n)}{\sqrt{n}h}
    \end{align}
    holds on $\mathcal{A}$. Taking subsequences we may assume that \eqref{b2} and \eqref{b3} hold almost surely. Taking $a_n=n^\delta$ for a sufficiently small $\delta$ then yields that we may apply the Lyapunov CLT to obtain that
    \begin{align}
        T_{nR}^* \overset{d}{\rightarrow }\mathcal{N}(0,\sigma_R^2)
    \end{align}
    holds a.s. conditionally on $Z_1,...,Z_n$. The conditional a.s. analogue to \eqref{p2} is also an obvious consequence of the dominated convergence theorem. This yields the desired statement along the subsequence we took. We can argue like this along a subsubsequence of any subsequence which yields the desired result by the metrizability of weak convergence.
\end{proof}

\bibliographystyle{apalike}
\bibliography{deconvolution_dependence}

\end{document}